\begin{document}
\title{Investigations on Automorphism Groups of Quantum Stabilizer Codes}
\author{Hanson Hao}
\date{}
\maketitle

\begin{abstract}
The stabilizer formalism for quantum error-correcting codes has been, without doubt, the most successful at producing examples of quantum codes with strong error-correcting properties. In this paper, we discuss \emph{strong} automorphism groups of stabilizer codes, beginning with the analogous notion from the theory of classical codes. Two weakenings of this concept, the \emph{weak} automorphism group and \emph{Clifford-twisted} automorphism group, are also discussed, along with many examples highlighting the possible relationships between the types of ``automorphism groups". In particular, we construct an example of a $[[10,0,4]]$ stabilizer code showing how the Clifford-twisted automorphism groups might be connected to the Mathieu groups. Finally, nonexistence results are proved regarding stabilizer codes with highly transitive strong and weak automorphism groups, suggesting a potential inverse relationship between the error-correcting properties of a quantum code and the transitivity of those automorphism groups.
\end{abstract}

\newpage

\tableofcontents

\newpage

\section{Introduction}\label{sec1}

Although quantum computers are thought to be significantly more efficient than classical computers (e.g. Shor's algorithm for integer factorization), they are also inherently more susceptible to noise and breakdown processes. The construction of a \emph{quantum error-correcting code} (QECC) in \cite{Shor1995} demonstrated how protection against a single-qubit error was possible, which was a milestone towards the practical realization of quantum computers. However, the problem of finding ``good" QECCs is still difficult, and their theory is still in an early stage of development, compared to, say, the theory of classical error-correcting codes. We were interested in this very general question of finding ``good" QECCs, with a particular emphasis on investigating how ``symmetric" a QECC could be. We were also motivated by the recent paper of Harvey and Moore \cite{Harvey2020}, in which possible connections are found among QECCs, conformal field theories, and the Mathieu moonshine phenomena. Similar theories for classical codes revolve around finding their automorphism groups, which explains the motivations behind some of the definitions and results in Section \ref{sec2} and Section \ref{sec3}.

This paper is structured as follows: in the remainder of this section we will present a brief background of quantum error correction and \emph{stabilizer codes}, following \cite{Gottesman2009} and Chapter 10 of \cite{Nielsen2011}. We will put more emphasis on the mathematical structures and skip much of the physical intuition behind QECCs. In Section \ref{sec2} we will introduce three notions of automorphism groups of quantum codes, which we call \emph{strong}, \emph{weak}, and \emph{Clifford-twisted}. We prove some basic results regarding these different types of automorphism groups. The bulk of this section, however, will be devoted to giving examples of all three types of automorphism groups of small QECCs, along with explaining how they might be calculated by hand. Particularly interesting is the example in Section \ref{sec2.2.6}, which gives evidence of a possible connection between Clifford-twisted automorphism groups of stabilizer codes and the Mathieu groups. In Section \ref{sec3} we will show in certain cases that the strong and weak automorphism group of ``good" stabilizer codes cannot be highly transitive, which suggests that the Clifford-twisted automorphism groups might have the most interesting properties. Along the way, we mention some problems that we found interesting, but did not have time to think about.

\subsection*{Acknowledgements}

This research was supported by the Stanford Undergraduate Research Institute in Mathematics (SURIM) program during the summer of 2021. The author would like to thank Dr. Pawel Grzegrzolka for coordinating the program. The author also thanks his mentor, Dr. Daniel Bump, for introducing him to the area of quantum error correction, and also for many helpful conversations and insightful suggestions. His idea of investigating ``twisted automorphisms" (Definitions \ref{weakAutDef}, \ref{cliffordAut}) was particularly fruitful.

\subsection{Prelude: Classical Error Correction}\label{sec1.1}

We give a very brief overview of classical error correction to serve as motivation for the constructions of quantum error correction theory, which will be presented in Section \ref{sec1.2}. First, recall that in classical computing, the basic unit of computation is the \emph{bit}, which takes a state of either 0 or 1; i.e. an element of $\F_2$. Thus an $n$-bit state is just an element of $\F_2^n$.

The fundamental idea of classical error correction is repetition. In particular, if we want to transmit $k$ bits of information, we would repeat that information in such a way that it would be ``hard" to confuse slight moderations of the encoded information with an encoding of a different $k$ bits of information. For instance, if we wanted to send the bits 0 and 1 through a somewhat noisy channel, we could instead send the 3-bit sequences 000 and 111, so that even if there is a 1-bit error, we could use a ``majority rules" scheme to correct the error. As an explicit example, if we received 010, which is not a possible codeword, we could conclude with high probability that 000 was the intended message. In this manner, we are essentially embedding one bit into three by the identification of $\F_2^1$ with the 1-dimensional subspace $\{000,111\}$ in $\F_2^3$, so we call such a scheme a \emph{linear code}. We will soon see how the same ideas apply in the quantum world, although there are some caveats.

\subsection{Quantum Error Correction}\label{sec1.2}

\begin{definition}\label{qubit}
A \emph{qubit} is the basic unit of quantum computation. It is represented as the complex vector space $\C^2$ with basis vectors $\bf{0}$ and $\bf{1}$, so the state of a qubit is given by a nonzero linear combination (superposition) $a\mathbf{0}+b\mathbf{1}$.
\end{definition}

Some notational remarks are in order. First, we will write vectors/states in boldface, instead of the bra-ket convention popular in physics. Second, we will not bother with normalizing the state $a\mathbf{0}+b\mathbf{1}$ of a qubit so that $\abs{a}^2+\abs{b}^2=1$, as that does not affect our discussion.

\begin{definition}\label{nqubit}
A space of $n$ qubits is represented as the tensor product $(\C^2)^{\otimes n}=\underbrace{\C^2\otimes\ldots\otimes\C^2}_{n\text{ times}}$, where tensor products are taken over $\C$. A basis for this space is given by $$\{\mathbf{0\ldots 0}, \mathbf{0\ldots01},\ldots,\mathbf{1\ldots 1}\},$$ the $2^n$ binary vectors of length $n$.
\end{definition}

We will usually refer to such a space of $n$ qubits as $\C^{2^n}$.\\

The central idea of quantum error correction is to identify a $k$-qubit space with some $2^k$-dimensional linear subspace, called the \emph{codespace}, of an $n$-qubit space, called the \emph{ambient space}, where $n>k$. We will call states in the ambient $2^n$-dimensional space \emph{physical}, and states in the $2^k$-dimensional codespace \emph{logical}. For example, if we encode a one-qubit space into three qubits by the map $\mathbf{0}\mapsto\mathbf{000}, \mathbf{1}\mapsto\bf{111}$, we may refer to the physical state $\bf{000}$ as ``logical $\bf{0}$", denoted $\mathbf{0}_L$.

In this formalism, errors will just be linear operators acting on the codespace. To have error-correcting properties, this codespace should be redundant in some sense, but there is an added difficulty in that we are not allowed to create repetitions of quantum states, as we might do in classical computing. This is the \emph{no-cloning theorem}, as discussed in Theorem 1 of \cite{Gottesman2009}. Therefore, this codespace should consist of highly entangled states---``redundancy without repetition". Moreover, in general physical situations, one may encounter errors randomly affecting a single qubit, or a small number of qubits, of the ambient space. A good code allows such errors to be corrected.


We now state the quantum error-correction conditions as a ``black box". A full derivation can be found in Section 10.3 of \cite{Nielsen2011}.

\begin{theorem}\label{qecCondition}
Let $C\subset\C^{2^n}$ be a quantum code, and let $P$ be the orthogonal projection onto $C$. Then $C$ can correct a set of errors $\mathcal{E}=\{E_i\}$ if and only if there is a complex Hermitian matrix $(\alpha_{ij})$ such that
\begin{equation}\label{qecConditionEq}
PE_i^*E_jP=\alpha_{ij}P,
\end{equation}
where $E_i, E_j$ run over all operators in $\mathcal{E}$, and $^*$ is the conjugate transpose.
\end{theorem}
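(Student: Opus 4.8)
The plan is to establish both directions by analyzing when errors can be "undone" by a recovery operation, i.e.\ a trace-preserving completely positive map $\mathcal{R}$ whose Kraus operators $\{R_k\}$ satisfy $\sum_k R_k^* R_k = I$. We want $\mathcal{R}(E_i \rho E_j^*) \propto \rho$ for all logical states $\rho = P\rho P$ supported on $C$. First I would handle the easy direction: assume such an $\mathcal{R}$ exists. Since the codespace is protected, for every Kraus operator we must have $R_k E_i P = \lambda_{ki} P$ for scalars $\lambda_{ki}$ (a recovery operator cannot "spread" a logical state outside its one-dimensional orbit; more precisely $R_k E_i$ restricted to $C$ must be a scalar times an isometry into $C$, and composing with $P$ forces proportionality to $P$). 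Then compute
\begin{equation}
P E_i^* E_j P = P E_i^* \Bigl(\sum_k R_k^* R_k\Bigr) E_j P = \sum_k (\lambda_{ki} P)^* (\lambda_{kj} P) = \Bigl(\sum_k \overline{\lambda_{ki}} \lambda_{kj}\Bigr) P,
\end{equation}
so setting $\alpha_{ij} = \sum_k \overline{\lambda_{ki}}\lambda_{kj}$ gives a matrix that is manifestly Hermitian (indeed positive semidefinite, being a Gram matrix), establishing \eqref{qecConditionEq}.

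For the converse, suppose $P E_i^* E_j P = \alpha_{ij} P$ with $(\alpha_{ij})$ Hermitian. Diagonalize: write $\alpha = U D U^*$ with $U$ unitary and $D$ diagonal with nonnegative entries $d_k$ (nonnegativity follows because $(\alpha_{ij})$ is forced to be positive semidefinite — apply the hypothesis to a unit vector $\mathbf{v}\in C$ and note $\sum_{ij}\bar{c}_i c_j \langle \mathbf{v}, E_i^* E_j \mathbf{v}\rangle = \|\sum_i c_i E_i \mathbf{v}\|^2 \ge 0$). Define new error operators $F_k = \sum_i (U^*)_{ki} E_i$; then a short computation gives $P F_k^* F_l P = d_k \delta_{kl} P$. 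This says the operators $F_k P$ have mutually orthogonal images and $F_k P$ is $\sqrt{d_k}$ times a partial isometry from $C$ onto a subspace $C_k$. Now I would construct the recovery map: on each $C_k$ (for $d_k > 0$) let $R_k$ be the inverse partial isometry $C_k \to C$, extended by zero on the orthogonal complement; add one more Kraus operator projecting onto $(\bigoplus_k C_k)^\perp$ composed with any isometry into $C$ to make the map trace-preserving. Then verify $R_k F_l P = \sqrt{d_l}\,\delta_{kl} P$, hence $\sum_k R_k (F_l \rho F_m^*) R_k^* = \sqrt{d_l d_m}\,\delta_{lm}\,\rho$ for $\rho$ supported on $C$, which upon unwinding the unitary change of basis back to the $E_i$ gives $\mathcal{R}(E_i \rho E_j^*) = \alpha_{ij}\rho$ — exactly correctability of $\mathcal{E}$.

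The main obstacle, and the step deserving the most care, is the converse construction of the recovery operation — specifically checking that the partial isometries $R_k$ are well defined (this is where orthogonality of the images $C_k$, coming from the off-diagonal vanishing $P F_k^* F_l P = 0$, is essential) and that the collection can be completed to a genuine trace-preserving quantum operation. One must also be slightly careful that the errors $E_i$ need not themselves be unitary or even have trivial kernel on $C$; the argument only uses the behavior of $E_i$ on the codespace, which is exactly what $P E_i^* E_j P$ records. Since the excerpt explicitly defers the full derivation to Section 10.3 of \cite{Nielsen2011} and presents the result as a "black box," I would at most sketch the above and cite the reference for the detailed verification of the operator-sum recovery map.
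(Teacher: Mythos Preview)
Your sketch is essentially the standard Knill--Laflamme argument, and it is correct in outline. Note, however, that the paper itself gives \emph{no} proof of this theorem: it explicitly introduces the statement as a ``black box'' and defers entirely to Section~10.3 of \cite{Nielsen2011}. So there is nothing in the paper to compare against; what you have written is precisely (a condensed version of) the proof that the cited reference contains, and you correctly acknowledge this at the end of your proposal.

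Two small points worth tightening if you choose to include the sketch rather than just the citation. First, in the necessity direction, the claim $R_k E_i P = \lambda_{ki} P$ deserves one more sentence: the clean way is to observe that $\sum_{k,i} R_k E_i |\psi\rangle\langle\psi| E_i^* R_k^* = |\psi\rangle\langle\psi|$ expresses a pure state as a convex combination, forcing each $R_k E_i|\psi\rangle \propto |\psi\rangle$, and then linearity over an orthogonal basis of $C$ shows the proportionality constant is independent of $|\psi\rangle$. Second, your change-of-basis formula $F_k = \sum_i (U^*)_{ki} E_i$ has a transpose/conjugate slip; the combination that actually yields $P F_k^* F_l P = d_k\delta_{kl}P$ from $\alpha = UDU^*$ is $F_k = \sum_i U_{ik} E_i$. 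Neither issue affects the structure of the argument.
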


Using Theorem \ref{qecCondition}, one can show that if a code $C$ corrects a set of errors $\{E_i\}$, then it also corrects \emph{any linear combination} of the $E_i$. This is an extremely powerful observation, because instead of possibly having to correct a continuum of errors, it now suffices to focus only on correcting a discrete set of error operations that span the full set of errors we want to correct. Because of this observation, it makes sense to introduce the Pauli matrices:

\begin{definition}\label{pauliMatrix}
The four Pauli matrices are defined as follows:
\begin{equation}\label{paulis}
I=\begin{bmatrix}1&0\\0&1\end{bmatrix},\quad X=\begin{bmatrix}0&1\\1&0\end{bmatrix},\quad Z=\begin{bmatrix}1&0\\0&-1\end{bmatrix}, \quad Y=iXZ=\begin{bmatrix}0&-i\\i&0\end{bmatrix}. 
\end{equation}
\end{definition}

We call $X$ the \emph{bit flip} operator, since it sends $\bf{0}$ to $\bf{1}$ and vice versa. Similarly, we call $Z$ the \emph{phase flip} operator, since it sends $\bf{0}$ to itself, but $\bf{1}$ to $-\bf{1}$. Finally, $Y$ is a combined bit-phase operator, with an extra factor of $i$ thrown in to make the mathematics easier. The Pauli matrices act on the one-qubit space $\C^2$, but $n$-fold tensor products of them act on an $n$-qubit space in the natural way. For example, the tensor product $X\otimes Z\otimes I$ acts on $\C^{2^3}$ by sending $\bf{000}$ to $\bf{100}$, $\bf{111}$ to $-\bf{011}$, etc. In the rest of this paper, we will simply write tensor products of Pauli matrices as concatenations, so that the above $X\otimes Z\otimes I$ operator is simply $XZI$.

The Pauli matrices have remarkable properties that will be fully exploited in the rest of the paper. First, they are all both unitary and Hermitian, and they form a basis of the 2-by-2 complex matrices. So by the discussion after Theorem \ref{qecCondition}, if we wanted to, for instance, correct arbitrary errors occurring on the first qubit of a three-qubit system, we simply need to correct the errors $XII$, $ZII$, and $YII$ (along with $III$, which is automatic). Second, all of the Pauli matrices square to the identity, and they all commute or anticommute. In particular, two Pauli matrices anticommute if and only if they are different nonidentity matrices. This implies that the Pauli matrices form a projective representation of the four-group $\Z_2\times\Z_2$, which is why they are chosen as our basis of the 2-by-2 complex matrices. Since complex scalars, in general, do not concern us, the fact that the Pauli matrices act like the elements of $\Z_2\times\Z_2$ is of great utility.

\subsection{The Pauli Group and Stabilizer Codes}\label{sec1.3}

Now, although the quantum error-correction condition \ref{qecConditionEq} is easy to verify for any particular code and set of errors, it is difficult to actually construct a code correcting a given set of error operations, particularly if that set is large. Indeed, we are usually interested in correcting sets of errors such as ``all one-qubit errors," which necessitates correcting an error set of size $3n$ if the ambient space is an $n$-qubit space (we need to correct an $X$, $Z$, and $Y$ error at each physical qubit). The stabilizer formalism introduced by Gottesman \cite{Gottesman1997} provides a convenient workaround to this problem.

\begin{definition}\label{pauliGroup}
The one-qubit \emph{Pauli group}, denoted $G_1$, is the group of matrices generated by $X$, $Y$, and $Z$. This is a group of order 16: $$G_1=\{\pm I, \pm iI, \pm X, \pm iX, \pm Y, \pm iY, \pm Z, \pm iZ\}.$$
\end{definition}

Note that the elements of $G_1$ are just Pauli matrices multiplied by some phase factor, which is a fourth root of unity. All elements in $G_1$ square to plus or minus the identity, and all elements commute or anticommute. This means that $G_1$ is ``almost" an elementary abelian 2-group: its commutator subgroup is $\gen{-I}$, and $G_1/[G_1,G_1]$ is indeed an elementary abelian 2-group of order $2^{2n+1}$. It will also be useful to speak of $G_1$ mod its center $Z(G_1)=\gen{iI}$, so that we can consider elements without worrying about scalar multiples. We denote $G_1/Z(G_1)$ by $P_1$.

It is natural to generalize the Pauli group to $n$-fold tensor products:

\begin{definition}\label{pauliNGroup}
The $n$-qubit Pauli group, denoted $G_n$, is the group of matrices generated by $n$-fold tensor products of the Pauli matrices. This is a group of order $4^{n+1}$, since there are $4^n$ possible tensor products, along with 4 possible phases (the fourth roots of unity).
\end{definition}

Again, $[G_n,G_n]=\gen{-I}$, $G_n/[G_n,G_n]$ is an elementary abelian 2-group, and $Z(G_n)=\gen{iI}$. We denote $G_n/Z(G_n)$ by $P_n$.

Before moving on to stabilizer codes, we mention an extremely useful way of writing elements of $P_n$, known as the \emph{check matrix}. We see that an element $s\in P_n$ can uniquely be written as $$X^{a_1}Z^{b_1}\otimes X^{a_2}Z^{b_2}\otimes\ldots\otimes X^{a_n}Z^{b_n},$$ where the $a_i, b_i$ are 0 or 1. Therefore we represent $s$ as
\begin{equation}\label{checkMatrix}
(a|b)=
\left[
\begin{array}{cccc|cccc}
a_1&a_2&\ldots&a_n&b_1&b_2&\ldots&b_n
\end{array}
\right],
\end{equation}
which can be thought of as a vector in $\F_2^{2n}$. For example, the element $IXYZ$ is represented as
\[
\left[
\begin{array}{cccc|cccc}
0&1&1&0&0&0&1&1 \\
\end{array}
\right].
\]

The utility of this representation becomes clear when we define a symplectic bilinear form over $\F_2^{2n}$ given by
\begin{equation}\label{sympForm}
\gen{(a|b),(c|d)}=a\cdot d+b\cdot c,
\end{equation}
where $\cdot$ is the usual dot product in $\F_2^n$. Then it is straightforward to check that if $s, s'\in P_n$ have check matrix representations $(a|b)$, $(a'|b')$ respectively, $s$ and $s'$ commute if and only if $\gen{(a|b),(a'|b')}=0$. Since $P_n$ is simply $G_n$ modded out by its center, the same is true for any lifts of $s, s'$ to elements in $G_n$. The check matrix representation gives us a compact way of writing set of elements in $G_n$ or $P_n$, which will be useful later on.

We are now ready to define stabilizer codes. The key is to consider the action of $G_n$ on $\C^{2^n}$, and consider the fixed points of a subgroup of $G_n$.

\begin{definition}\label{stabilizerCodes}
Let $S$ be an \emph{abelian} subgroup of $G_n$ \emph{not containing} $-I$. Then
\begin{equation}\label{stabilizerCodeDef}
C(S)=\{v\in\C^{2^n}:sv=v\text{ for all }s\in S\}
\end{equation}
is the \emph{stabilizer code} corresponding to $S$. We call $S$ the \emph{stabilizing subgroup} corresponding to $C(S)$.
\end{definition}
It is easy to check that $C(S)$ is a subspace of the ambient space.

It is important to make a few remarks regarding this definition. First, $C(S)$ must contain \emph{all} vectors in $\C^{2^n}$ that are fixed by everything in $S$. In particular, for a code $C'$ to be called a stabilizer code, there must be some abelian subgroup $S$ of $G_n$, not containing $-I$, such that $C'=C(S)$. It is not enough for $C'\subset C(S)$. Second, we do not want $S$ to contain $-I$. Otherwise, if $-I\in S$, then for any $v\in C(S)$, we have $v=-Iv\Ra v=0$, so that $C(S)$ is trivial. Similarly, we require $S$ to be abelian: if $s,s'\in S$ do not commute, then they must anticommute, so that for any $v\in C(S)$, we have $v=(ss')v=(-s's)v=-(s'sv)=-v\Ra v=0$. Note that because elements in $G_n$ either square to $I$ or $-I$, these two conditions imply that $S$ is an elementary abelian 2-group. Also, elements in $S$ may only have a scalar factor of plus or minus 1, lest they square to $-I$.

Given the construction of stabilizer codes, we would like to somehow connect a stabilizing subgroup $S\subset G_n$ with the dimension of $C(S)$, as well as with the errors that $C(S)$ can correct. Let us first answer the former question. First, note that $S$ can have size at most $2^n$. To see why, suppose $S$ has $m$ independent generators (so $\abs{S}=2^m$), which we write in the check matrix format. We may think of $S$ as an $m$-dimensional subspace of the $\F_2$-vector space $\F_2^{2n}$. Because these generators all commute, $S$ is self-orthogonal with respect to the bilinear form in \ref{sympForm}; that is, $S\subseteq S^\perp$. But $\dim S+\dim S^\perp=2n$, so that $m=\dim S\leq n$.

With $m\leq n$ in mind, it now makes sense to state the following proposition:
\begin{proposition}\label{stabCodeDim}
Let $C$ be the stabilizer code corresponding to a stabilizing subgroup $S\subset G_n$. If $S$ has $m$ independent generators, or equivalently $\abs{S}=2^m$, then $C$ has dimension $2^{n-m}$; that is, it encodes $k\coloneqq n-m$ logical qubits.
\end{proposition}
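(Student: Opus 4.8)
The plan is to decompose the ambient space $\C^{2^n}$ into simultaneous eigenspaces of the commuting operators in $S$ and show that $C(S)$ is exactly one such eigenspace, all of which have equal dimension. First I would fix independent generators $g_1,\ldots,g_m$ of $S$. Since $S$ is an elementary abelian $2$-group of self-adjoint unitaries (each $g_i$ is Hermitian and squares to $I$), each $g_i$ has eigenvalues $\pm 1$, and since the $g_i$ commute they are simultaneously diagonalizable. Hence $\C^{2^n}$ decomposes as an orthogonal direct sum $\bigoplus_{\epsilon\in\{\pm 1\}^m} V_\epsilon$, where $V_\epsilon=\{v : g_i v = \epsilon_i v \text{ for all } i\}$, and $C(S)=V_{(1,\ldots,1)}$.

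The main step is to prove that every eigenspace $V_\epsilon$ is nonzero and that all $2^m$ of them have the same dimension, forcing $\dim C(S) = 2^n/2^m = 2^{n-m}$. For this I would produce, for each sign vector $\epsilon$, a unitary operator $U_\epsilon\in G_n$ that conjugates the "all $+1$" eigenspace onto $V_\epsilon$. The key observation is that, because $S\subseteq S^\perp$ has dimension $m \le n$ with respect to the symplectic form \ref{sympForm}, one can find for each $i$ a Pauli operator $h_i\in G_n$ that anticommutes with $g_i$ but commutes with $g_j$ for $j\neq i$ — in symplectic-form language, the $h_i$ form a "dual basis" to the $g_i$ inside a complementary Lagrangian-type subspace; such $h_i$ exist because the $g_i$ are linearly independent mod scalars, so the linear functionals $\langle\,\cdot\,,g_i\rangle$ on $\F_2^{2n}$ are independent and can be simultaneously prescribed. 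Then for a sign pattern $\epsilon$, the product $U_\epsilon$ of those $h_i$ with $\epsilon_i=-1$ satisfies $U_\epsilon g_j U_\epsilon^{-1} = \pm g_j$ with the sign equal to $\epsilon_j$, so conjugation by $U_\epsilon$ (equivalently, the map $v\mapsto U_\epsilon v$) carries $V_{(1,\ldots,1)}$ isomorphically onto $V_\epsilon$. Since the $V_\epsilon$ are pairwise isomorphic and partition $\C^{2^n}$ into $2^m$ pieces, each has dimension $2^{n-m}$; in particular $\dim C(S)=2^{n-m}$, i.e.\ $k=n-m$ logical qubits.

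The step I expect to be the main obstacle — and the one deserving the most care — is the construction of the dual Pauli operators $h_i$ and the verification that the relevant signs come out exactly as claimed. Concretely, one must check that $\langle\,\cdot\,,g_i\rangle$, viewed as functionals $\F_2^{2n}\to\F_2$, remain linearly independent (this follows from nondegeneracy of the symplectic form together with linear independence of the check vectors of the $g_i$), so that a solution $h_i$ to the prescribed system of commutation/anticommutation relations exists; and then one must track the phase signs carefully when multiplying Pauli operators, remembering that only $\pm 1$ scalars occur in $S$ and that conjugation $h\,g\,h^{-1}$ of one Pauli by another introduces exactly a factor $(-1)^{\langle h,g\rangle}$. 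Once the existence of the $h_i$ is in hand, the remaining bookkeeping (that conjugation by $U_\epsilon$ is unitary, hence a vector-space isomorphism, and that it sends $V_{(1,\ldots,1)}$ onto $V_\epsilon$) is routine. An alternative, more computational route avoids the $h_i$ entirely: average the projector, writing $P_\epsilon = \frac{1}{2^m}\sum_{s\in S}\chi_\epsilon(s)\,s$ for the appropriate character $\chi_\epsilon$, and compute $\operatorname{tr}P_\epsilon = \frac{1}{2^m}\sum_{s\in S}\chi_\epsilon(s)\operatorname{tr}(s) = \frac{1}{2^m}\chi_\epsilon(I)\operatorname{tr}(I) = 2^{n-m}$, using that every nonidentity element of $S$ is a nontrivial tensor factor of Paulis and hence traceless; since $P_\epsilon$ is an orthogonal projector, its rank equals its trace, giving $\dim V_\epsilon = 2^{n-m}$ directly. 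I would likely present this trace computation as the cleanest argument, with the conjugation picture mentioned as intuition for why the eigenspaces are "the same size."
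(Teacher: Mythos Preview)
Your proposal is correct. In fact, the ``alternative, more computational route'' you sketch at the end---forming the projector $P=\frac{1}{|S|}\sum_{s\in S}s$, checking it is an orthogonal projection onto $C$, and reading off $\dim C=\operatorname{tr}P=2^{n-m}$ because every nonidentity $s\in S$ is traceless---is precisely the paper's proof (the paper only treats the single projector onto $C$, not the full family $P_\epsilon$, but the computation is identical).

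Your primary approach, via simultaneous eigenspace decomposition and construction of dual Pauli operators $h_i$ satisfying $h_ig_jh_i^{-1}=(-1)^{\delta_{ij}}g_j$, is a genuinely different route. It trades the one-line trace calculation for a more structural argument: you get not just $\dim C=2^{n-m}$ but an explicit family of unitaries in $G_n$ identifying all $2^m$ joint eigenspaces with one another. The cost is exactly the step you flagged---producing the $h_i$ from nondegeneracy of the symplectic form and independence of the $g_i$---which is standard but requires more setup than the paper's trace argument. Either approach is fine; the paper opts for brevity, while yours yields extra information (the $h_i$ are essentially the ``destabilizers'' or logical-operator witnesses that appear elsewhere in stabilizer theory).
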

\begin{proof}
We claim that the orthogonal projection onto $C$ is given by $P=\frac{1}{\abs{S}}\sum_{s\in S}s$. We must check that
\begin{enumerate}
\item $P=P^*$,
\item $Pv=v$ for all $v\in C$,
\item $P^2=P$.
\end{enumerate}
For (1), we notice that the elements of $S$ are, up to sign, tensor products of the Pauli matrices, which are all Hermitian. Hence any $s\in S$ is Hermitian, so $P$ is as well. For (2), we calculate $Pv=\frac{1}{\abs{S}}\sum_{s\in S}sv=\frac{1}{\abs{S}}\sum_{s\in S}v=v$, since $v\in C$ implies $sv=v$ for all $s\in S$. For (3), we have $$P^2=\left(\frac{1}{\abs{S}}\sum_{s\in S}s\right)\left(\frac{1}{\abs{S}}\sum_{t\in S}t\right)=\frac{1}{\abs{S}}\sum_{s\in S}\left(\frac{1}{\abs{S}}\sum_{t\in S}st\right)=\frac{1}{\abs{S}}\sum_{s\in S}P=P,$$ since summing over all $st\in S$, for $s$ fixed, is the same as summing over all $t\in S$.

So because $P$ is an orthogonal projection onto $C$, we have $\dim C=\text{rk }P=\tr P$. Notice that $X$, $Y$, and $Z$ are traceless. Using this and the fact that $-I\not\in S$, we see that the only term in $\sum_{s\in S}s$ that has nonzero trace is the identity. Hence $\tr P=\frac{1}{\abs{S}}(\tr I)=\frac{2^n}{\abs{S}}=2^{n-m}$.
\end{proof}
In this situation, we call $C$ an $[[n,k]]$ code, where the double brackets are meant to distinguish the quantum code $C$ from classical codes. From now on, $S$ will always be assumed to an abelian subgroup of $G_n$ not containing $-I$, and we will just write $C$ for $C(S)$ if there is no ambiguity in the stabilizing subgroup $S$.

We now discuss the error-correcting properties of $C$. First, we need to define a slight abuse of notation:
\begin{definition}\label{NS-S}
Let $N(S)$ be the normalizer of $S$ in $G_n$. We say that an element $p\in P_n$ is in $N(S)-S$ if there is some lift $g\in G_n$ of $p$ in $N(S)-S$.
\end{definition}
The idea behind this definition is that we can and should disregard the scalar phase of any element of $G_n$, since we know that if $C$ corrects some error $E$, it corrects any scalar multiple of $E$. Notice that $N(S)$ is equal to the centralizer of $S$, since two elements of $G_n$ either commute or anticommute, and $-I\not\in S$.

Along with this definition, we must reinterpret the quantum error-correction conditions, Theorem \ref{qecCondition}, in the language of stabilizer codes.
\begin{theorem}\label{qecStabilizer}
Let $C$ be the stabilizer code corresponding to a subgroup $S$. If $\{E_i\}$ is a set of error operators in $G_n$ such that $E_i^*E_j\not\in N(S)-S$ for all $i$ and $j$, then the $\{E_i\}$ are correctable.
\end{theorem}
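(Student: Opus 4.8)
The plan is to verify the abstract quantum error-correction conditions of Theorem \ref{qecCondition} directly, using the explicit formula $P=\frac{1}{\abs{S}}\sum_{s\in S}s$ for the orthogonal projection onto $C$ that was established in the proof of Proposition \ref{stabCodeDim}. Fixing $i,j$ and setting $E=E_i^*E_j\in G_n$, I must exhibit scalars $\alpha_{ij}$, forming a Hermitian matrix, with $PEP=\alpha_{ij}P$. The hypothesis $E_i^*E_j\notin N(S)-S$ will be used to rule out the one case where $PEP$ fails to be a scalar multiple of $P$.

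Concretely, the hypothesis splits into two cases, where membership in $N(S)$ and in $S$ is read up to a scalar phase, as in Definition \ref{NS-S}. In the first case $E$ does not commute with some $s_0\in S$, so $Es_0=-s_0E$ since any two elements of $G_n$ commute or anticommute. As left and right multiplication by $s_0$ permute $S$, we have $s_0P=Ps_0=P$, hence $PEP=PEs_0P=-Ps_0EP=-PEP$, so $PEP=0$ and we set $\alpha_{ij}=0$. In the second case $E=\lambda s$ for some $s\in S$ and some fourth root of unity $\lambda$; since $sP=P$ and $P^2=P$, we get $PEP=\lambda PsP=\lambda P$, and we set $\alpha_{ij}=\lambda$. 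The only remaining possibility, $E\in N(S)\setminus S$ up to phase, is exactly what the hypothesis forbids, and it is precisely the situation where $PEP$ need not be proportional to $P$.

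It then remains to check that $(\alpha_{ij})$ is Hermitian, i.e.\ $\alpha_{ji}=\overline{\alpha_{ij}}$. The useful observation is that for $g=\lambda\sigma\in G_n$ with $\sigma$ a tensor product of Pauli matrices we have $g^*=\overline{\lambda}\,\sigma=g^{-1}$, so $(E_i^*E_j)^*=E_j^*E_i$ and $N(S)$ is closed under the adjoint. Thus if $E_i^*E_j$ lands in the first case so does $E_j^*E_i$, giving $\alpha_{ji}=0=\overline{\alpha_{ij}}$; and if $E_i^*E_j=\lambda s$ with $s\in S$ Hermitian, then $E_j^*E_i=\overline{\lambda}\,s$, giving $\alpha_{ji}=\overline{\lambda}=\overline{\alpha_{ij}}$. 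With a Hermitian matrix $(\alpha_{ij})$ satisfying $PE_i^*E_jP=\alpha_{ij}P$ now in hand, Theorem \ref{qecCondition} immediately shows that $C$ corrects $\{E_i\}$, which is the claim.

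I do not expect a genuine obstacle here: once Theorem \ref{qecCondition} is granted, everything reduces to the operator identities above, all of which follow from the formula for $P$ and the commute/anticommute dichotomy in $G_n$. The one place demanding care is the phase bookkeeping — confirming that the two-case split is genuinely exhaustive after ``membership in $S$ or $N(S)$'' is interpreted modulo $\gen{iI}$, and tracking the $\lambda$'s precisely enough to see that $(\alpha_{ij})$ is Hermitian rather than merely well-defined.
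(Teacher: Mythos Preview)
Your proof is correct. The paper does not actually prove this theorem in the text; it simply refers the reader to Theorem~10.8 of \cite{Nielsen2011}. Your argument --- verify the conditions of Theorem~\ref{qecCondition} directly using $P=\frac{1}{\abs{S}}\sum_{s\in S}s$, splitting according to whether $E_i^*E_j$ anticommutes with some stabilizer element (giving $\alpha_{ij}=0$ via $s_0P=Ps_0=P$) or equals $\lambda s$ for some $s\in S$ (giving $\alpha_{ij}=\lambda$), and then checking Hermiticity through $(E_i^*E_j)^*=E_j^*E_i$ --- is the standard one. In fact the paper's source contains a commented-out draft of exactly this proof; the only cosmetic difference is that the draft writes the projector in product form $P=\frac{1}{\abs{S}}\prod_l(I+g_l)$ and kills the anticommuting case via $(I+g_1)(I-g_1)=0$ rather than your cleaner $s_0P=P$ identity.
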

\begin{proof}
See Theorem 10.8, \cite{Nielsen2011}.
\end{proof}

The criterion in Theorem \ref{qecStabilizer} is much easier to use than that in Theorem \ref{qecCondition}. Instead of having to construct the projection matrix onto our code and do various matrix calculations, the error-correcting properties of a stabilizer code can be computed only using knowledge of $G_n$. Because of this, we introduce a few more definitions regarding elements of $G_n$ and $P_n$:
\begin{definition}\label{wt}
The \emph{weight} of an element $p\in P_n$ is the number of non-identity tensor factors in $p$.
\end{definition}
For instance, the element $XZIIZ\in P_5$ has weight 3, since it has three non-identity tensor factors.
\begin{definition}\label{dist}
Let $C$ be an $[[n,k]]$ stabilizer code corresponding to a subgroup $S\subset G_n$, where $k\geq1$. Then the \emph{distance} $d$ of $C$ is defined as
\begin{equation}
d=\min\{\weight(p):p\in P_n, p\in N(S)-S\}.
\end{equation}
Recall that we write $p\in N(S)-S$ as in the sense of Definition \ref{NS-S}. In this case, we call $C$ a $[[n,k,d]]$ code.
\end{definition}
\begin{remark}\label{distk=0}
In the degenerate case $k=0$, where $C$ is 1-dimensional (i.e. a single state), $N(S)-S$ is empty, so the above definition does not make sense. We follow the convention of \cite{Calderbank1997} and say that the distance is
\begin{equation}
d=\min\{\weight(p):p\in S, p\not=I\}.
\end{equation}
\end{remark}

From these definitions, we see that the product of any two weight $w$ operators has weight at most $2w$, and that the conjugate transpose does not change the weight of an operator. Then it follows from Theorem \ref{qecStabilizer} that any code with distance greater than $2w$ can correct errors affecting any $w$ qubits.
\begin{example}\label{513Code1}
Consider a subgroup $S\subset G_5$ generated by $$\{XZZXI, IXZZX, XIXZZ, ZXIXZ\}.$$ These elements all pairwise commute, and $S$ does not contain $-I$, so the stabilizer code $C$ corresponding to $S$ encodes $5-4=1$ logical qubit. Moreover, one can directly verify that the distance of $S$ is 3---for instance, $XIZIX\in N(S)-S$, but there are no elements $p\in P_n$ of lesser weight in $N(S)-S$. This shows that $C$ can correct any error affecting one qubit, since $3>2\cdot1$.

Using the projection matrix of a stabilizer code as in Proposition \ref{stabCodeDim}, we can find a basis for $C$:
\begin{align*}
\mathbf{0}_L&=\mathbf{00000}+\mathbf{10010}+\mathbf{01001}+\mathbf{10100}\\
&+\mathbf{01010}-\mathbf{11011}-\mathbf{00110}-\mathbf{11000}\\
&-\mathbf{11101}-\mathbf{00011}-\mathbf{11110}-\mathbf{01111}\\
&-\mathbf{10001}-\mathbf{01100}-\mathbf{10111}+\mathbf{00101}\\
\mathbf{1}_L&=(XXXXX)\mathbf{0}_L\\
&=\mathbf{11111}+\mathbf{01101}+\mathbf{10110}+\mathbf{01011}\\
&+\mathbf{10101}-\mathbf{00100}-\mathbf{11001}-\mathbf{00111}\\
&-\mathbf{00010}-\mathbf{11100}-\mathbf{00001}-\mathbf{10000}\\
&-\mathbf{01110}-\mathbf{10011}-\mathbf{01000}+\mathbf{11010}.
\end{align*}
\end{example}

\begin{remark}\label{smallestRmk}
The aforementioned stabilizer code is the smallest able to correct one error on any qubit, in terms of the number $n$ of physical qubits.
\end{remark}

More examples of small stabilizer codes, as well as more details about their theory, can be found in Section 10.5 of \cite{Nielsen2011}. Many more examples of stabilizer codes, including codes with larger parameters $[[n,k,d]]$, can be found at \cite{codetables}. We will reference those tables of quantum codes in the below sections.

\section{Automorphism Groups of Codes}\label{sec2}

\subsection{Definitions and Basic Results}\label{sec2.1}

Just like in the theory of classical codes, we can define the automorphism group of a QECC:

\begin{definition}\label{strongAutDef}
Let $C$ be a quantum code. There is a natural action of $S_n$ on the $n$-qubit ambient space $\C^{2^n}$. We define the \emph{strong automorphism group} of $C$ as
\begin{equation}
\Autstr(C)=\{\sigma\in S_n:\sigma(C)=C\}.
\end{equation}
\end{definition}
We are particularly interested in the case where $C$ is a stabilizer code with stabilizing subgroup $S$. Indeed, in this case we may also consider the action of $S_n$ on $G_n$ given by permuting tensor factors. This action is equivalently given by $\sigma(g)=M_\sigma gM_\sigma^{-1}$, where $g\in G_n$ and $M_\sigma^{-1}$ is the permutation matrix associated to $\sigma$ in the natural basis of $\C^{2^n}$. Then the above condition for $\Autstr(C)$ can be rephrased in terms of $S$:

\begin{proposition}\label{strongAutProp}
Let $C$ be a nontrivial (i.e. nonzero) stabilizer code corresponding to a subgroup $S\subset G_n$. Then $\sigma\in\Autstr(C)$ if and only if $\sigma(S)=S$.
\end{proposition}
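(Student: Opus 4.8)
The plan is to prove both implications by exploiting the two descriptions of the $S_n$-action: on states in $\C^{2^n}$ via permutation matrices $M_\sigma$, and on $G_n$ via conjugation $g \mapsto M_\sigma g M_\sigma^{-1}$. The key compatibility fact, which I would establish first, is that these two actions intertwine: for any $s \in S$ and $v \in \C^{2^n}$, we have $\sigma(s)\,\sigma(v) = M_\sigma s M_\sigma^{-1} M_\sigma v = M_\sigma (sv)$. In particular $sv = v$ if and only if $\sigma(s)\sigma(v) = \sigma(v)$.

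For the ``if'' direction, suppose $\sigma(S) = S$. Take $v \in C$, so $sv = v$ for all $s \in S$. I want to show $\sigma(v) = M_\sigma v \in C$, i.e. that $t\,\sigma(v) = \sigma(v)$ for every $t \in S$. Given such $t$, since $\sigma(S) = S$ there is $s \in S$ with $\sigma(s) = t$; by the intertwining identity, $t\,\sigma(v) = \sigma(s)\sigma(v) = \sigma(sv) = \sigma(v)$. Hence $\sigma(C) \subseteq C$. Applying the same argument to $\sigma^{-1}$ (noting $\sigma^{-1}(S) = S$ as well, since $\sigma(S)=S$ and $S_n$ acts by a group action) gives $\sigma^{-1}(C) \subseteq C$, so $\sigma(C) = C$ and $\sigma \in \Autstr(C)$.

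For the ``only if'' direction, suppose $\sigma \in \Autstr(C)$, i.e. $M_\sigma C = C$. I need $\sigma(S) = S$. The natural route is to recover $S$ intrinsically from $C$: since $C = C(S)$ and $P = \frac{1}{|S|}\sum_{s \in S} s$ is the orthogonal projection onto $C$ (Proposition~\ref{stabCodeDim}), conjugation by $M_\sigma$ sends $P$ to $M_\sigma P M_\sigma^{-1} = \frac{1}{|S|}\sum_{s \in S}\sigma(s)$, which is the projection onto $M_\sigma C = C$; by uniqueness of the orthogonal projection, $\frac{1}{|S|}\sum_{s \in S}\sigma(s) = \frac{1}{|S|}\sum_{s \in S} s$. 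Now $\sigma(S)$ is again an abelian subgroup of $G_n$ of the same order not containing $-I$, and its elements $\sigma(s)$ are (up to sign) distinct tensor products of Paulis; comparing the two sums of such operators—which are linearly independent in the $4^n$-dimensional matrix space—forces $\sigma(S) = S$ as sets (with matching signs). Alternatively, and perhaps more cleanly, one can argue that $\sigma(S) = \{\text{lifts to } G_n \text{ of } p \in P_n : pv = v \text{ for all } v \in C,\ p \text{ is } \pm\text{ a Pauli tensor}\}$ is determined by $C$, and since $\sigma$ permutes such stabilizing elements, $\sigma(S) = S$.

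The main obstacle is the ``only if'' direction: one must be careful that knowing $\sigma$ preserves $C$ only a priori tells us $\sigma(S)$ stabilizes $C$, and one needs the converse-type fact that a maximal abelian (stabilizer) subgroup with the given properties is \emph{uniquely} determined by its code—equivalently, that $S$ can be read off from $P$. This is exactly where linear independence of Pauli tensors, together with the sign constraint ($-I \notin S$, elements have scalar $\pm 1$), does the work; the ``if'' direction is then essentially formal once the intertwining identity is in hand.
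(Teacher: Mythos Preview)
Your proof is correct. The ``if'' direction matches the paper's argument essentially verbatim (the paper gets $\sigma(C)=C$ from $\sigma(C)\subseteq C$ by invoking invertibility of $\sigma$ as a linear map rather than by applying the argument to $\sigma^{-1}$, but this is cosmetic).

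For the ``only if'' direction you take a genuinely different route. The paper argues as follows: since $\sigma(C)=C$, both $S$ and $\sigma(S)$ fix $C$ pointwise, hence so does the join $\langle S,\sigma(S)\rangle$; because $C$ is nontrivial, this join must be abelian and cannot contain $-I$, and then the dimension formula (Proposition~\ref{stabCodeDim}) forces $|\langle S,\sigma(S)\rangle|=|S|$, so $\sigma(S)\subseteq S$ and equality follows by bijectivity. Your argument instead conjugates the projector $P=\frac{1}{|S|}\sum_{s\in S}s$ by $M_\sigma$, invokes uniqueness of the orthogonal projection onto $C$, and then reads off $\sigma(S)=S$ from the equality $\sum_{s\in S}\sigma(s)=\sum_{s\in S}s$ using linear independence of the $4^n$ Pauli tensor products (together with $-I\notin S$, so distinct elements of $S$ have distinct underlying Pauli tensors). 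The paper's approach stays within the group-theoretic framework and reuses Proposition~\ref{stabCodeDim} rather than appealing to the Pauli basis; your approach is a clean linear-algebraic alternative and has the incidental feature that the nontriviality hypothesis is never explicitly invoked (it is automatic under the paper's standing conventions anyway). Both are short and natural; your second sketch at the end (recovering $S$ as ``the stabilizing Pauli elements of $C$'') is really just a rephrasing of the projection argument and is best left as you have it.
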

\begin{proof}
Notice that $\sigma(S)$ fixes $\sigma(C)$ pointwise. Hence if $\sigma(S)=S$, it follows that $\sigma(C)\subseteq C$. But $\sigma$ acts as an invertible linear map $C\to\sigma(C)$, which implies $\sigma(C)=C$. This proves the ``if" direction. Conversely, if $\sigma(C)=C$, then $\sigma(C)$ is fixed pointwise by $S$. It is also fixed pointwise by $\sigma(S)$, so the same is true for the join $\gen{S,\sigma(S)}$ of the two subgroups. Because $C$ is nontrivial, $\gen{S,\sigma(S)}$ must be abelian and must not contain $-I$. Then by Proposition \ref{stabCodeDim}, $\gen{S,\sigma(S)}$ must have the same size as $S$, which implies $\sigma(S)\subseteq S$. But $\sigma$ is bijective, so we have equality, proving the reverse direction.
\end{proof}

Since all $\sigma\in S_n$ act as invertible linear maps on $S$, we can simplify the result of Proposition \ref{strongAutProp} to a form that is more suitable for actually computing strong automorphism groups.

\begin{corollary}\label{strongAutcor}
Let $C$ be a nontrivial stabilizer code corresponding to a subgroup $S\subset G_n$. Let $S$ be generated by $g_1,\ldots, g_m$. Then $\sigma\in\Autstr(C)$ if and only if $\sigma(g_i)\in S$ for each $i$.
\end{corollary}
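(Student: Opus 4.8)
The plan is to deduce this immediately from Proposition \ref{strongAutProp}, whose content is precisely that $\sigma\in\Autstr(C)$ if and only if $\sigma(S)=S$. So all that remains is to show that the generator-wise condition ``$\sigma(g_i)\in S$ for each $i$'' is equivalent to the subgroup equality ``$\sigma(S)=S$''.

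The forward implication is immediate: if $\sigma\in\Autstr(C)$, then $\sigma(S)=S$ by Proposition \ref{strongAutProp}, and since each $g_i\in S$ we get $\sigma(g_i)\in S$. For the reverse implication, the key observation I would make is that the $S_n$-action on $\C^{2^n}$ restricts to a \emph{group automorphism} of $G_n$: explicitly $\sigma(g)=M_\sigma g M_\sigma^{-1}$, so $\sigma(gh)=\sigma(g)\sigma(h)$, and $\sigma$ is bijective on $G_n$ with inverse given by conjugation by $M_\sigma^{-1}$. Consequently $\sigma$ carries the subgroup generated by $g_1,\dots,g_m$ onto the subgroup generated by $\sigma(g_1),\dots,\sigma(g_m)$, i.e. $\sigma(S)=\gen{\sigma(g_1),\dots,\sigma(g_m)}$. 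If every $\sigma(g_i)$ lies in $S$, this forces $\sigma(S)\subseteq S$. Since $\sigma$ is injective, $\abs{\sigma(S)}=\abs{S}$, and a subgroup of the finite group $S$ with the same cardinality as $S$ must equal $S$. Hence $\sigma(S)=S$, and Proposition \ref{strongAutProp} then gives $\sigma\in\Autstr(C)$.

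There is no genuine obstacle here; the statement is essentially a repackaging of Proposition \ref{strongAutProp} into a form suitable for hand computation, where one only checks finitely many generators. The one point deserving a moment's care is exactly that passage from ``membership on generators'' to ``equality of subgroups'', which is handled by the cardinality argument above and relies on $\sigma$ being an honest automorphism of $G_n$ (in particular injective), not merely an invertible linear map on $\C^{2^n}$. As an alternative to the cardinality step, one could observe that $\sigma^{-1}$ is again a permutation of tensor factors, apply the containment $\sigma(S)\subseteq S$ with $\sigma$ replaced by $\sigma^{-1}$, and combine the two containments to conclude $\sigma(S)=S$.
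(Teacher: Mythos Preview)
Your main argument is correct and matches the paper's own (one-sentence) justification preceding the corollary: $\sigma$ acts as a group automorphism of $G_n$, so $\sigma(g_i)\in S$ for all $i$ gives $\sigma(S)\subseteq S$, and injectivity plus finiteness forces equality, whence Proposition~\ref{strongAutProp} applies.

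One small caveat: the ``alternative'' you sketch at the end does not quite go through as written. From the hypothesis you only know $\sigma(g_i)\in S$, which yields $\sigma(S)\subseteq S$; replacing $\sigma$ by $\sigma^{-1}$ would require $\sigma^{-1}(g_i)\in S$, which is not given, and applying $\sigma^{-1}$ to the containment $\sigma(S)\subseteq S$ produces $S\subseteq\sigma^{-1}(S)$, the wrong direction. So the cardinality step (or something equivalent to it) really is needed; the alternative is not an independent route.
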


\begin{example}\label{311code}
Let $S\subset G_3$ be generated by $XZZ$ and $ZXZ$, so that $S=\{III, XZZ, ZXZ, YYI\}$. Using the Kronecker product for matrices, we can calculate the projection onto $C=C(S)$ as
\[
P=\frac{1}{\abs{S}}\sum_{s\in S}s=\frac{1}{4}\begin{bmatrix}
1&0&1&0&1&0&-1&0\\
0&1&0&-1&0&-1&0&-1\\
1&0&1&0&1&0&-1&0\\
0&-1&0&1&0&1&0&1\\
1&0&1&0&1&0&-1&0\\
0&-1&0&1&0&1&0&1\\
-1&0&-1&0&-1&0&1&0\\
0&-1&0&1&0&1&0&1
\end{bmatrix},
\]
so that $C$ is spanned by $\mathbf{0}_L=\frac{1}{2}\left(\mathbf{000}+\mathbf{010}+\mathbf{100}-\mathbf{110}\right)$ and $\mathbf{1}_L=\frac{1}{2}\left(\mathbf{001}-\mathbf{011}-\mathbf{101}-\mathbf{111}\right)$. Checking each of the permutations in $S_3$, we obtain $\Autstr(C)=\{(1), (12)\}\cong\Z_2$. Indeed, these are also the only permutations $\sigma$ that satisfy $\sigma(S)=S$, as is easily seen.
\end{example}

\begin{remark}\label{2.1rmk1}
At this point, it is worth mentioning that the strong automorphism group of a stabilizer code cannot be characterized by the parameters $[[n,k,d]]$. For instance, let $S=\{IIII, XXXX, ZZZZ, YYYY\}$ and $S'=\{IIII, XXZZ, YYXX, ZZYY\}$ be subgroups of $G_4$. Then the corresponding stabilizer codes $C$ and $C'$ both have parameters $[[4,2,2]]$. But $\Autstr(C)=S_4$, while $\Autstr(C')=\{(1),(12),(34),(12)(34)\}\cong\Z_2\times\Z_2$.
\end{remark}

It turns out that the notion of ``strong automorphism group" does not produce many interesting examples for stabilizer codes with small $n$; in particular, it seems that for most codes, the strong automorphism group is usually small compared to the full symmetric group. We will see more examples in Section \ref{sec2.2}, and we will give some reasons as to why this general heuristic might be true in Section \ref{sec3}. Therefore we now introduce an extension of the automorphism group concept.

\begin{definition}\label{weakAutDef}
Let $C$ be a quantum code. We define the \emph{weak automorphism group} of $C$ as
\begin{equation}
\Autweak(C)=\{\sigma\in S_n:\sigma(C)=\gamma_\sigma\cdot C \text{ for some }\gamma_\sigma\in G_n\}.
\end{equation}
\end{definition}
That is, $\sigma(C)$ is ``almost" $C$, where we allow a twist of the vectors in $C$ by some element $\gamma_\sigma$ of the Pauli group (which is allowed to vary depending on $\sigma$, and is not necessarily unique). We should check that $\Autweak(C)$ is actually a group. Indeed this is the case, since we can view $\Autweak(C)$ as a semidirect product of sorts: if $\sigma,\tau\in\Autweak(C)$ correspond to $\gamma_\sigma,\gamma_\tau\in G_n$, then
\begin{equation}
\sigma\tau(C)=(\sigma(\gamma_\tau)\cdot\gamma_\sigma)\cdot C,
\end{equation}
as can be easily verified. Recall that the action of $\sigma$ on $G_n$ by permuting tensor factors is the same as the matrix action by conjugation, $g\mapsto M_{\sigma}gM_{\sigma}^{-1}$. Hence $\sigma\tau\in G_n$, since we can set $\gamma_{\sigma\tau}=\sigma(\gamma_\tau)\cdot\gamma_\sigma$.

We now want to somewhat justify the labels ``strong automorphism" and ``weak automorphism". In Proposition \ref{strongAutProp}, we were able to give a characterization of strong automorphisms in terms of the stabilizing subgroup of a stabilizer code. We can do the same for weak automorphisms:
\begin{proposition}\label{weakAutProp}
Let $C$ be a nontrivial stabilizer code corresponding to a subgroup $S\subset G_n$. Then $\sigma\in\Autweak(C)$ if and only if $\sigma(S)\subset S\cup -S$, where $-S=\{-s:s\in S\}$.
\end{proposition}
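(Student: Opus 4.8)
The plan is to mimic the structure of Proposition \ref{strongAutProp}, but tracking the Pauli twist $\gamma_\sigma$. For the ``if'' direction, suppose $\sigma(S) \subseteq S \cup -S$. The subgroup $\sigma(S)$ stabilizes $\sigma(C)$ pointwise, so I want to produce a single $\gamma \in G_n$ with $\gamma \cdot C \subseteq \sigma(C)$, after which the invertibility of $\sigma$ as a linear map $C \to \sigma(C)$ (equal dimensions) forces equality. The idea is that for each generator $g_i$ of $S$, either $\sigma(g_i) \in S$ or $-\sigma(g_i) \in S$; in the latter case I need to conjugate by a Pauli element $\gamma$ so that $\gamma \sigma(g_i) \gamma^{-1} = -\sigma(g_i)$. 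Since any element of $G_n$ anticommutes with some Pauli, and since the generators $\sigma(g_i)$ form an independent set in the symplectic space $\F_2^{2n}$ (being the image of an independent set under a linear automorphism), I can choose one $\gamma$ simultaneously flipping exactly the ``bad'' generators and fixing the ``good'' ones — this is a linear-algebra statement about the symplectic form \ref{sympForm}: find $\gamma$ with $\langle \gamma, \sigma(g_i)\rangle = 1$ precisely on the bad indices, which is solvable because the $\sigma(g_i)$ are linearly independent. Then $\gamma \cdot \sigma(C)$ is stabilized by all the $g_j$ (the signs having been corrected), so $\gamma \cdot \sigma(C) \subseteq C$, hence $\sigma(C) \subseteq \gamma^{-1} \cdot C$, and equality follows by dimension count.

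For the converse, assume $\sigma \in \Autweak(C)$, so $\sigma(C) = \gamma_\sigma \cdot C$ for some $\gamma_\sigma \in G_n$. Then $\sigma(C)$ is stabilized pointwise by $\sigma(S)$, and it is also stabilized by $\gamma_\sigma S \gamma_\sigma^{-1}$ (the conjugate of the stabilizer of $C$). I want to conclude $\sigma(S) \subseteq \gamma_\sigma S \gamma_\sigma^{-1}$. Consider the join $T = \langle \sigma(S), \gamma_\sigma S \gamma_\sigma^{-1}\rangle$: it stabilizes the nontrivial code $\sigma(C)$ pointwise, so by the reasoning after Definition \ref{stabilizerCodes} it must be abelian and not contain $-I$, and by Proposition \ref{stabCodeDim} it must have the same order as $S$ (since $\sigma(C)$ and $C$ have the same dimension), forcing $T = \gamma_\sigma S \gamma_\sigma^{-1}$ and hence $\sigma(S) \subseteq \gamma_\sigma S \gamma_\sigma^{-1}$. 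Finally I translate ``$\sigma(g) \in \gamma_\sigma S \gamma_\sigma^{-1}$'' into ``$\sigma(g) \in S \cup -S$'': conjugation by a Pauli element preserves each Pauli tensor word up to sign (indeed, for $h, \gamma \in G_n$ we have $\gamma h \gamma^{-1} = \pm h$ since all elements of $G_n$ commute or anticommute), so $\gamma_\sigma S \gamma_\sigma^{-1} \subseteq S \cup -S$, which gives the claim.

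I expect the main obstacle to be the ``if'' direction, specifically the step that produces a \emph{single} twist $\gamma$ correcting all the sign discrepancies at once. One has to be careful that flipping the sign of one generator via conjugation does not disturb the signs of the others, and that is exactly where independence of $\{\sigma(g_i)\}$ in the symplectic space is used: the map $\gamma \mapsto (\langle \gamma, \sigma(g_1)\rangle, \ldots, \langle \gamma, \sigma(g_m)\rangle)$ from $\F_2^{2n}$ to $\F_2^m$ is surjective precisely because the $\sigma(g_i)$ are independent, so any prescribed pattern of flips is achievable. (Alternatively, one can build $\gamma$ greedily using a symplectic basis extending $\{\sigma(g_i)\}$, but the surjectivity argument is cleanest.) Everything else is bookkeeping with the facts, already established in the excerpt, that $G_n$-elements pairwise commute or anticommute, that conjugation by $G_n$ is the permutation-free analogue of the $S_n$-action, and that stabilizer codes of a fixed dimension have stabilizing subgroups of a fixed order.
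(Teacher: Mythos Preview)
Your proof is correct and follows essentially the same approach as the paper's. The only notable difference is in the ``if'' direction: where the paper renumbers the generators and uses a dimension count on successive orthogonal complements to locate a $\gamma$ commuting with $m-1$ elements and anticommuting with one, you observe directly that the pairing map $\gamma \mapsto (\langle \gamma, \sigma(g_i)\rangle)_i$ is surjective onto $\F_2^m$ by independence of the $\sigma(g_i)$ --- this is the same linear-algebra fact, phrased a bit more cleanly.
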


\begin{corollary}\label{weakAutcor}
Since $\sigma$ is bijective, the following conditions are equivalent to the one in Proposition \ref{weakAutProp}:
\begin{enumerate}
\item If $S$ is generated by $g_1,\ldots, g_m$, then $\sigma\in\Autweak(C)$ if and only if $\sigma(g_i)$ or $-\sigma(g_i)$ is in $S$ for each $i$.
\item If $\pi$ is the projection $G_n\to G_n/\gen{-I}$, then $\pi(\sigma(S))=\pi(S)$; that is, $\sigma(S)$ and $S$ are the same subgroup up to sign.
\end{enumerate}
\end{corollary}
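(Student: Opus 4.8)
The plan is to take the criterion of Proposition \ref{weakAutProp}, namely $\sigma(S)\subseteq S\cup-S$, as the hub and prove that each of conditions (1) and (2) is equivalent to it; this is cleaner than relating (1) and (2) to each other directly. The single observation that drives everything is that $S\cup-S$ is itself a subgroup of $G_n$: since $-I\notin S$, we have $S\cup-S=\gen{S,-I}=\pi^{-1}(\pi(S))$, a group of order $2\abs{S}$ on which $\pi$ restricts to a two-to-one map onto $\pi(S)$, so in particular $\abs{\pi(S)}=\abs{S}$.

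For the equivalence with (1), the forward implication is immediate: each generator $g_i$ lies in $S$, so $\sigma(g_i)\in\sigma(S)\subseteq S\cup-S$. For the reverse, I would use that $\sigma$ (being conjugation by the permutation matrix $M_\sigma$) is a group homomorphism, so that $\sigma(S)=\gen{\sigma(g_1),\ldots,\sigma(g_m)}$. If each $\sigma(g_i)$ or its negative lies in $S$, i.e. $\sigma(g_i)\in S\cup-S$, then because $S\cup-S$ is closed under products and inverses, the subgroup generated by the $\sigma(g_i)$ still lies inside $S\cup-S$, giving $\sigma(S)\subseteq S\cup-S$. Thus (1) is exactly the Proposition's criterion checked only on generators, and this suffices precisely because $S\cup-S$ is a subgroup.

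For the equivalence with (2), I would first note that $\sigma$ fixes the scalar $-I$ (indeed $M_\sigma(-I)M_\sigma^{-1}=-I$), so $\sigma$ preserves $\gen{-I}$ and descends to a bijection $\bar\sigma$ of $G_n/\gen{-I}$ satisfying $\pi\circ\sigma=\bar\sigma\circ\pi$. The Proposition's criterion $\sigma(S)\subseteq\pi^{-1}(\pi(S))$ is then equivalent to $\pi(\sigma(S))\subseteq\pi(S)$, that is $\bar\sigma(\pi(S))\subseteq\pi(S)$. Here bijectivity does the real work: since $\bar\sigma$ is injective and $-I\notin S$, we have $\abs{\pi(\sigma(S))}=\abs{\pi(S)}=\abs{S}$, so a containment of finite sets of equal size must be an equality, yielding $\pi(\sigma(S))=\pi(S)$, which is condition (2). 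The converse is trivial since equality implies containment.

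The main obstacle is not any single computation but making the two finiteness arguments airtight: verifying that $S\cup-S$ is genuinely closed, so that containment on generators propagates to the whole subgroup in (1), and upgrading the containment $\bar\sigma(\pi(S))\subseteq\pi(S)$ to an equality in (2) using only the bijectivity of $\sigma$ together with $-I\notin S$. Both ultimately reduce to the elementary fact that $\sigma$ is an order-preserving automorphism of $G_n$ fixing $-I$, which is exactly the hypothesis flagged in the phrase ``Since $\sigma$ is bijective.''
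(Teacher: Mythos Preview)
Your argument is correct. The paper does not actually supply a separate proof of this corollary: it is stated with the prefatory clause ``Since $\sigma$ is bijective'' and left as an immediate consequence, with the subsequent proof environment being the deferred proof of Proposition~\ref{weakAutProp} rather than of the corollary. Your write-up therefore fills in exactly the details the paper elides, and does so along the intended lines: using that $S\cup-S=\gen{S,-I}$ is a subgroup so that the generator check in (1) propagates, and using that $\sigma$ fixes $-I$ together with a cardinality count to upgrade containment to equality in (2).
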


\begin{proof}[Proof of Proposition \ref{weakAutProp}]
First, if $\sigma\in\Autweak(C)$, then $\sigma(C)=\gamma_\sigma\cdot C$ for some $\gamma_\sigma\in G_n$. Because $C$ and $\sigma(C)$ have the same dimension, it follows from dimension considerations (Proposition \ref{stabCodeDim}) that $\sigma(S)$ is exactly the set of elements stabilizing $\sigma(C)$. Similarly, because $\gamma_\sigma$ is invertible, it follows that $\gamma_\sigma S\gamma_\sigma^{-1}$ is exactly the set of elements stabilizing $\sigma(C)$, so $\sigma(S)=\gamma_\sigma S\gamma_\sigma^{-1}$. But $\gamma_\sigma$ commutes or anti-commutes with everything in $S$, so $\sigma(S)=\gamma_\sigma S\gamma_\sigma^{-1}\subset S\cup-S$.

Conversely, suppose $\sigma(S)\subset S\cup-S$. $\sigma$ is an isomorphism of abelian groups, and $\sigma(S)$ does not contain $-I$ (otherwise $\sigma^{-1}(-I)=-I\in S$, contradiction). Since $S\cup-S=\{\pm I,\pm a_1,\pm a_2,\ldots\}$ for an enumeration of the elements of $S$, by counting and the fact that $-I\not\in\sigma(S)$, we see that plus or minus $s$ is in $\sigma(S)$ for any $s\in S$. In particular, if $s_1,\ldots, s_m$ are an independent set of generators for $S$, we may find unique $r_1,\ldots, r_m\in S$ such that $\sigma(r_i)$ equals plus or minus $s_i$. It follows that the $\{\sigma(r_i)\}$ generate $\sigma(S)$. 

Set $e_i=\sigma(r_i)s_i$, so $e_i\in\{\pm1\}$. Renumber the $r$'s, $s$'s, and $e$'s so that $e_1=\ldots=e_a=1$, and $e_{a+1}=\ldots=e_m=-1$. We would like to produce an element $\gamma\in G_n$ such that $\gamma$ commutes with $s_1,\ldots, s_a$, and anticommutes with $s_{a+1},\ldots, s_m$ (if $a=m$, then take $\gamma=I$, so we can assume $a<m$). It suffices to find a $\gamma$ that commutes with $\{s_1,\ldots, s_a, s_{a+1}s_{a+2}, s_{a+1}s_{a+3},\ldots, s_{a+1}s_m\}$, and anticommutes with $s_{a+1}$.

Recall the check matrix representation of elements in $G_n$ (where we disregard elements of the center $\gen{iI}$), so we may consider $S$ as an $m$-dimensional subspace of $\F_2^{2n}$. It follows that the elements in $S$ commuting with all of $\{s_1,\ldots, s_a, s_{a+1}s_{a+2}, s_{a+1}s_{a+3},\ldots, s_{a+1}s_m\}$ form a subspace $V$ of dimension $2n-(m-1)$, which is the orthogonal complement with respect to the symplectic bilinear form \ref{sympForm}. Similarly, the elements in $S$ commuting with all of $\{s_1,\ldots, s_a, \bld{s_{a+1}}, s_{a+1}s_{a+2}, s_{a+1}s_{a+3},\ldots, s_{a+1}s_m\}$ form a subspace $W$ of dimension $2n-m$. Therefore we may find some element in $V$ but not in $W$, and the corresponding $\gamma\in G_n$ (taking the phase scalar factor to be 1) is the desired element.

It follows that $\gamma S\gamma^{-1}\cong S$ is generated by $$\{s_1,\ldots, s_a,-s_{a+1},\ldots, -s_m\}=\{\sigma(r_1),\ldots,\sigma(r_m)\},$$ so that $\gamma S\gamma^{-1}=\sigma(S)$. $\gamma S\gamma^{-1}$ is the stabilizer for $\gamma\cdot C$, and $\sigma(S)$ is the stabilizer for $\sigma(C)$, so $\gamma\cdot C=\sigma(C)$. Setting $\gamma_\sigma\coloneqq\gamma$, we have $\sigma\in\Autweak(C)$.
\end{proof}

This somewhat justifies the ``strong" and ``weak" labels, since Propositions \ref{strongAutProp} and \ref{weakAutProp} show that a strong automorphism of $C(S)$ sends $S$ to itself, while a weak automorphism sends $S$ to itself but ``disregarding signs". Of course the strong automorphism group is a subgroup of the weak automorphism group.

\begin{example}\label{301code}
Let $S\subset G_3$ be generated by $XXX$, $YYI$, and $ZXZ$, so that \\$S=\{III, XXX, YYI, ZXZ, -ZZX, -YIY, XZZ, -IYY\}$. Then Corollaries \ref{strongAutcor} and \ref{weakAutcor} show that the strong automorphism group of $C=C(S)$ is $\{(1), (12)\}\cong\Z_2$, while the weak automorphism group is all of $S_3$.
\end{example}

This example shows that this weaker notion of automorphism can enlarge the strong automorphism group, and we will see in some later examples that $\Autweak(C)$ can be quite a bit larger than $\Autstr(C)$ for certain stabilizer codes $C$. This example also shows that $\Autstr(C)$ is not necessarily normal in $\Autweak(C)$. So we may pose the following (somewhat vague) question:

\begin{question}\label{q1}
Can we describe the relationship between $\Autstr(C)$ and $\Autweak(C)$ for a given stabilizer code $C=C(S)$?
\end{question}

We extend the notion of an automorphism group one last time. For this, we need to introduce the (one-qubit) \emph{Clifford group} $L_1$, following the terminology of \cite{Bolt1961} and \cite{Bolt1961-2}.

\begin{definition}\label{cliffordGroup}
The (one-qubit) \emph{Clifford group} $L_1$ is the subgroup of the normalizer of $G_1$ in $U(2)$ that contains entries from $\Q\left(\frac{1+i}{\sqrt{2}}\right)=\Q(e^{\frac{\pi i}{4}})$. It is generated by $G_1$, $H=\frac{1}{\sqrt{2}}\begin{bmatrix}1&1\\1&-1\end{bmatrix}$, and $S=\begin{bmatrix}1&0\\0&i\end{bmatrix}$.
\end{definition}

In the terminology of quantum logic gates, $H$ is the \emph{Hadamard gate}, and $S$ is the \emph{phase shift gate} where the shift is by $\frac{\pi}{4}$. Note that the conjugation actions of $H$ and $S$ on $X$, $Y$, and $Z$ are as follows:
\begin{center}
\begin{tabular}{|c|c|}
\hline
$HXH^{-1}=Z$&$SXS^{-1}=Y$\\
\hline
$HZH^{-1}=X$&$SZS^{-1}=Z$\\
\hline
$HYH^{-1}=-Y$&$SYS^{-1}=-X$\\
\hline
\end{tabular}.
\end{center}

We can extend the action of $L_1$ on $G_1$ to $n$ qubits as follows, which gives us our second weakening of the strong automorphism group:

\begin{definition}\label{nCliffordGroup}
We define the $n$-qubit \emph{diagonal Clifford group} $L_n$ as 
\begin{equation}
L_n=\{A_1\otimes\ldots\otimes A_n: A_i\in L_1\}.
\end{equation}
\end{definition}

$L_n$ has a natural conjugation action on $G_n$, which motivates the following definition.

\begin{definition}\label{cliffordAut}
Let $C$ be a stabilizer code corresponding to the subgroup $S\subset G_n$. We define the \emph{Clifford-twisted automorphism group} of $C$ as
\begin{equation}
\Autclif(C)=\{\sigma\in S_n:\sigma(S)=\lambda_\sigma S\lambda_\sigma^{-1}\text{ for some }\lambda_\sigma\in L_n\}.
\end{equation}
\end{definition}
Since $L_n$ contains $G_n$, we have the chain of inclusions $\Autclif(C)\supseteq\Autweak(C)\supseteq\Autstr(C)$ for a stabilizer code $C$. We are somewhat justified in calling $\Autclif(C)$ an automorphism group: it is a group for the same reason that $\Autweak(C)$ is a group, and because the elements of $L_n$ act as automorphisms of $G_n$, it is not hard to show that if $\sigma\in\Autclif(C)$, then $C$ and $\sigma(C)$ have the same $[[n,k,d]]$ parameters.

To aid computation, we make some useful observations involving Definition \ref{cliffordAut}. First, if $g_1,\ldots,g_m$ generate $S$, then for $\sigma$ to be in $\Autclif(C)$ it suffices to verify that $\sigma(g_i)\in\lambda_\sigma S\lambda_\sigma^{-1}$ for each $g_i$. This can be further weakened: it suffices that plus or minus $\sigma(g_i)$ is in $\lambda_\sigma S\lambda_\sigma^{-1}$, since if we are only off by a sign, we can simultaneously correct for that by conjugating by an appropriate element of $G_n$, as in the proof of Proposition \ref{weakAutProp}. Finally, because we can disregard signs, we can interpret the action of $L_1$ on $G_1$ (or each tensor factor of $G_n$) as $S_3$ acting on the non-identity Pauli matrices. For instance, the above table shows that $H$ induces the permutation $(XZ)$ and $S$ induces the permutation $(XY)$ on the set $\{X,Y,Z\}$, and various combinations of $H$ and $S$ induce the other permutations of $S_3$. So we can rephrase Definition \ref{cliffordAut} in a more verbose, but more intuitive, manner:

\begin{corollary}\label{cliffordAut2}
Let $C$ be a stabilizer code corresponding to the subgroup $S\subset G_n$, and consider the componentwise action of $(S_3)^n\coloneqq\underbrace{S_3\times\ldots\times S_3}_{n\text{ times}}$ on elements of $G_n$ (i.e. on each tensor factor, $S_3$ acts on the three non-identity Pauli matrices). Then $\sigma\in\Autclif(C)$ if and only if there is some $\rho_\sigma\in(S_3)^n$ such that for each generator $g_i$ of $S$, $\rho_\sigma\cdot(\sigma(g_i))\in S\cup -S$.
\end{corollary}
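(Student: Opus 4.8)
\textit{Proof proposal.} The plan is to peel Definition \ref{cliffordAut} apart one tensor factor at a time, reducing conjugation by $L_n$ to the componentwise $(S_3)^n$ action up to sign; everything then follows from the two observations recorded just before the corollary, namely that membership in $\Autclif(C)$ may be tested on a generating set, and that a sign discrepancy on those generators can be absorbed simultaneously by conjugating with an element of $G_n$.

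First I would set up the single-qubit dictionary. The conjugation action of $L_1$ on $G_1$ induces a homomorphism $\phi\colon L_1\to\mathrm{Sym}(\{X,Y,Z\})\cong S_3$: it is well-defined because any $\lambda\in L_1$ normalizes $G_1$ and conjugation by a unitary is injective, so for each non-identity Pauli $B$ the element $\lambda B\lambda^{-1}$ is a Hermitian element of $G_1$ different from $\pm I$, hence lies in $\{\pm X,\pm Y,\pm Z\}$, with distinct $B$'s going to distinct images up to sign; and it is surjective because the table above shows that $H$ and the phase gate $S$ induce the transpositions $(XZ)$ and $(XY)$, which generate $S_3$. Tensoring, for $\lambda=A_1\otimes\cdots\otimes A_n\in L_n$ I set $\phi_n(\lambda)\coloneqq(\phi(A_1),\ldots,\phi(A_n))\in(S_3)^n$; since conjugation by a tensor product acts factorwise, $\lambda g\lambda^{-1}$ agrees with $\phi_n(\lambda)\cdot g$ up to an overall sign for every $g\in G_n$, and $\phi_n$ is onto $(S_3)^n$. (Signs are all that matter here, since elements of $S$, and hence of $\sigma(S)$, carry only the scalar $\pm1$.)

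Granting the dictionary, the two directions are short. If $\sigma\in\Autclif(C)$ with witness $\lambda_\sigma\in L_n$, then $\sigma(g_i)\in\lambda_\sigma S\lambda_\sigma^{-1}$, so $\lambda_\sigma^{-1}\sigma(g_i)\lambda_\sigma\in S$, and the dictionary turns this into $\phi_n(\lambda_\sigma^{-1})\cdot\sigma(g_i)\in S\cup-S$; thus $\rho_\sigma\coloneqq\phi_n(\lambda_\sigma)^{-1}$ works. Conversely, given $\rho_\sigma\in(S_3)^n$ with $\rho_\sigma\cdot\sigma(g_i)\in S\cup-S$ for all $i$, surjectivity of $\phi_n$ produces $\lambda\in L_n$ with $\phi_n(\lambda^{-1})=\rho_\sigma$, and then $\lambda^{-1}\sigma(g_i)\lambda$ equals $\rho_\sigma\cdot\sigma(g_i)$ up to sign, hence lies in $S\cup-S$; so $\pm\sigma(g_i)\in\lambda S\lambda^{-1}$ for every generator. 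Since $\lambda S\lambda^{-1}$ is again an abelian subgroup of $G_n$ not containing $-I$ (as $-I$ is central), the sign-correction argument from the proof of Proposition \ref{weakAutProp}, applied with $\lambda S\lambda^{-1}$ in place of $S$, yields $\gamma\in G_n$ with $\gamma\lambda S\lambda^{-1}\gamma^{-1}=\sigma(S)$, so $\lambda_\sigma\coloneqq\gamma\lambda\in L_n$ witnesses $\sigma\in\Autclif(C)$.

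I expect the main obstacle to be pure bookkeeping: tracking the inversion $\rho\leftrightarrow\rho^{-1}$ and the factorwise signs, and checking that the reverse direction needs nothing beyond surjectivity of $\phi_n$ and the already-proved sign-absorption trick. The one non-formal ingredient is the surjectivity of $\phi\colon L_1\to S_3$ — that $H$ and the phase gate together realize all six permutations of the non-identity Paulis up to sign — together with the elementary fact that conjugation by a tensor product acts on each tensor factor separately, which is exactly what collapses the statement to the $(S_3)^n$ picture.
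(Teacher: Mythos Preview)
Your proposal is correct and follows essentially the same route as the paper. The paper does not give a separate formal proof of this corollary; it is stated as a direct rephrasing of the three observations in the paragraph immediately preceding it (checking on generators suffices, signs can be absorbed via the $\gamma$ constructed in the proof of Proposition~\ref{weakAutProp}, and the conjugation action of $L_1$ on $\{X,Y,Z\}$ realizes all of $S_3$ up to sign). You have simply written out those observations carefully as a two-sided argument, with the homomorphism $\phi_n\colon L_n\to(S_3)^n$ made explicit and the sign-correction step invoked against $\lambda S\lambda^{-1}$ rather than $S$ itself---which is exactly what the paper intends.
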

Note that we use $\cdot$ to emphasize the fact that $\sigma$ and $\rho_\sigma$ have very different actions: $\sigma$ permutes the tensor factors in each element, while $\rho_\sigma$ permutes $X$, $Y$, and $Z$ in each tensor factor.

Let us use Corollary \ref{cliffordAut2} in an example. Recall from Remark \ref{2.1rmk1} that if $C$ is the stabilizer code corresponding to the subgroup $S\subset G_4$ generated by $XXZZ$ and $YYXX$, then $\Autstr(C)=\{(1),(12),(34),(12)(34)\}\cong\Z_2\times\Z_2$. Using Corollary \ref{weakAutcor}, it is easy to see that $\Autweak(C)$ is the same group. However, we now show that $\Autclif(C)$ is the full $S_4$.
\begin{example}\label{422Clif}
We want to show that $\Autclif(C)=S_4$, and we already know $(12)\in\Autclif(C)$, so it suffices to show that $(1234)\in\Autclif(C)$. Applying this permutation to the generators $XXZZ$ and $YYXX$ gives $ZXXZ$ and $XYYX$. The latter two elements can be obtained from the former two by applying the permutation $(XZY)$ in the first tensor factor, and $(XYZ)$ in the third. Hence $(1234)\in\Autclif(C)$.
\end{example}

\subsection{Examples of Automorphism Groups}\label{sec2.2}

In this section we give many examples of $\Autstr(C), \Autweak(C)$, and $\Autclif(C)$ for various small stabilizer codes $C$ with large distance. Most codes will be taken from \cite{codetables}. The majority of these examples were calculated by hand and then verified using a SAGE program (see Appendix \ref{SAGEcode}). These hand calculations are somewhat tedious \textit{ad hoc} processes, so we will only try to give a general outline of how they were done. We will also make some observations about the more noteworthy automorphism groups. However, we should again remark that the automorphism groups of a stabilizer code cannot be completely determined by the parameters $[[n,k,d]]$, so these examples should be taken as interesting specific cases rather than showcasing a general phenomenon.

\subsubsection{A $[[5,1,3]]$ code}\label{sec2.2.1}

Recall from Example \ref{513Code1} that the subgroup $S\subset G_5$ generated by
\begin{align*}
&XZZXI\\
&IXZZX\\
&XIXZZ\\
&ZXIXZ
\end{align*}
gives a $[[5,1,3]]$ stabilizer code $C$, and it is the smallest able to correct one error on any qubit in terms of the number $n$ of physical qubits. We find that $\Autstr(C)=\Autweak(C)=\gen{(12345),(25)(34)}\cong D_{10}$, the dihedral group of order 10. One could compute these groups by writing out the elements of $S$ in full:
\begin{align*}
&XZZXI\quad XYIYX\quad ZIZYY\quad ZYYZI\\
&IXZZX\quad IZYYZ\quad YXXYI\quad YIYXX\\
&XIXZZ\quad YYZIZ\quad IYXXY\quad ZZXIX\\
&ZXIXZ\quad XXYIY\quad YZIZY\quad IIIII
\end{align*}
Then it is clear that $\Autstr(C)=\Autweak(C)$, since all of the elements have the same phase $+1$, and moreover any strong automorphism must take the four generators of $S$ to a cyclic permutation of $XZZXI$. It should follow pretty easily that $\Autstr(C)=\gen{(12345),(25)(34)}$. We call stabilizer codes $C$ with $(12\ldots n)\in\Autstr(C)$ (or more generally, any $n$-cycle in place of $(12\ldots n)$) \emph{cyclic}, and they are of particular interest (see \cite{Gottesman1997}).

It is much harder to come up with a systematic approach for calculating the Clifford-twisted automorphism group, so we must resort to ``guessing" elements in $\Autclif(C)$. It is a good idea to begin with testing $(12)$ and $(12\ldots n)$, since those two permutations generate $S_n$. In this case, we already have $(12345)\in\Autclif(C)$, so we try $(12)$. This permutation sends the generators to
\begin{align*}
&ZXZXI\\
&XIZZX\\
&IXXZZ\\
&XZIXZ,
\end{align*}
and twisting these by the permutations $(YZ)$, $(YZ)$, $(XZ)$, $(XY)$, and $(XZ)$ in the respective tensor factors gives
\begin{align*}
&YXXYI\\
&XIXZZ\\
&IXZZX\\
&XYIYX,
\end{align*}
all of which are in $S$. Hence $(12)\in\Autclif(C)\Ra\Autclif(C)=S_5$.

\subsubsection{A $[[6,0,4]]$ code}\label{sec2.2.2}

Recall the $\mathbf{0}_L$ and $\mathbf{1}_L$ states from Example \ref{513Code1}, which form a basis for the $[[5,1,3]]$ code mentioned above. We may consider the state $\mathbf{0}\otimes\mathbf{0}_L+\mathbf{1}\otimes\mathbf{1}_L\in\C^{2^6}$, which defines a (degenerate) $[[6,0,4]]$ code $C$. This ``code" is important as an example of a \emph{maximally entangled state}, following the terminology of \cite{Harvey2020}.

It is not hard to check that the stabilizing subgroup $S\subset G_6$ corresponding to $C$ is generated by
\begin{align*}
&IXZZXI\\
&IIXZZX\\
&IXIXZZ\\
&IZXIXZ\\
&XXXXXX\\
&ZZZZZZ.
\end{align*}
It turns out that while $\Autstr(C)$ is again $\gen{(23456),(36)(45)}\cong D_{10}$, $\Autweak(C)$ is in fact $\gen{(23456),(135)(264)}\cong\PSL(2,5)\cong A_5$. For instance, $(135)(264)$ sends $IXZZXI$ to $XZIIZX$, which is not in $S$, but rather in $-S$.

Note that this is another example where $\Autstr(C)$ is not normal in $\Autweak(C)$.

To calculate these groups, we adopt the procedure from Section \ref{sec2.2.1} and write out all the $2^{6-0}=64$ elements of $S$. Since any permutation fixes $XXXXXX$ and $ZZZZZZ$, we can focus on the first four generators of $S$. It is a good strategy to do casework on the image of the first tensor factor, since we would then need to find four elements in $S$ with $I$'s in that slot (perhaps with the proper sign, depending on whether we are calculating the strong or weak automorphism group). We also use the constraint that the first four generators must be sent to elements of $S$ (or $-S$) that have 2 $I$'s, 2 $X$'s, and 2 $Z$'s as tensor factors. All of these types of criteria---location of the $I$'s in the tensor product, weights of elements, and the types of Pauli matrices used in each tensor product---are highly useful when performing hand computations.

Finally, we claim that $\Autclif(C)=S_6$. First, we check that $(12)\in\Autclif(C)$: it sends the generators to
\begin{align*}
&XIZZXI\\
&IIXZZX\\
&XIIXZZ\\
&ZIXIXZ\\
&XXXXXX\\
&ZZZZZZ,
\end{align*}
and twisting these by the permutations $(XY)$, $(XY)$, $(XZ)$, $(YZ)$, $(YZ)$ and $(XZ)$ in the respective tensor factors gives
\begin{align*}
&YIXYXI\\
&IIZYYZ\\
&YIIXYX\\
&ZIZIXX\\
&YYZXXZ\\
&ZZXYYX,
\end{align*}
all of which are in $S$. It remains to check that $(123456)\in\Autclif(C)$. This sends the generators to
\begin{align*}
&IIXZZX\\
&XIIXZZ\\
&ZIXIXZ\\
&ZIZXIX\\
&XXXXXX\\
&ZZZZZZ,
\end{align*}
and twisting these by the permutations $(XY)$, $(XY)$, $(XZ)$, $(YZ)$, $(YZ)$ and $(XZ)$ in the respective tensor factors gives
\begin{align*}
&IIZYYZ\\
&YIIXYX\\
&ZIZIXX\\
&ZIXXIZ\\
&YYZXXZ\\
&ZZXYYX,
\end{align*}
all of which are in $S$. Again, for this computation, it is important to exploit the position of $I$'s as tensor factors, since conjugation by elements of $L_6$ cannot change $I$.

\subsubsection{The $[[7,1,3]]$ Steane code}\label{sec2.2.3}

The $[[7,1,3]]$ Steane code $C$ corresponds to a subgroup $S\subset G_7$ that has generators given by
\begin{align*}
&IIIXXXX\\
&IXXIIXX\\
&XIXIXIX\\
&IIIZZZZ\\
&IZZIIZZ\\
&ZIZIZIZ.
\end{align*}
$C$ is derived from the classical $[7,4]$ Hamming code---this is apparent if one writes out the check matrix representations of these generators, and compares that matrix to the parity-check matrix of the Hamming code. It is of primary interest as an example of a \emph{CSS code} (see Section 10.4.2 of \cite{Nielsen2011} for the full details). Since it is derived from the Hamming code, which has an automorphism group of $\PGL(3,2)\cong\PSL(2,7)$, it should be of no surprise that $\Autstr(C)=\Autweak(C)=\gen{(46)(57),(124)(365)}\cong\PGL(3,2)$. In general, it is at least true that the strong automorphism group of a CSS code is isomorphic to the automorphism group of the classical code it is derived from.

By writing out all of the $2^{7-1}=64$ elements of $S$, one can check that $\Autclif(C)$ is also $\PGL(3,2)$. This provides a nontrivial example where all three types of automorphism group are equal.

\subsubsection{An $[[8,3,3]]$ code}\label{sec2.2.4}

The tables at \cite{codetables} give us the generators for the stabilizing subgroup $S\subset G_8$ of an $[[8,3,3]]$ code $C$:
\begin{align*}
&XIZIYZXY\\
&IXZZYXYI\\
&IZXIYYZX\\
&IZIYZXXY\\
&ZZZZZZZZ.
\end{align*}
This code is significant because it is the smallest-sized code encoding 3 logical qubits that can correct any single-qubit error. We can calculate that
\begin{align*}
\Autstr(C)=\{&(1),(12)(35)(68)(47),(13)(25)(48)(67),(14)(27)(38)(56),(15)(23)(46)(78),\\&(16)(28)(37)(45),(17)(24)(36)(58),(18)(26)(34)(57)\}\cong\Z_2\times\Z_2\times\Z_2,
\end{align*}
\begin{equation*}
\Autweak(C)=\Autstr(C)\rtimes\gen{(2453876)}\cong(\Z_2\times\Z_2\times\Z_2)\rtimes\Z_7.
\end{equation*}
We may alternatively identify $\Autweak(C)$ as $\AGL(1,8)$, the 1-dimensional affine general linear group over $\F_8$, which acts sharply 2-transitively on 8 points. This is a group of order 56 with a unique Sylow 2-subgroup isomorphic to $\Z_2\times\Z_2\times\Z_2$.

To calculate these groups, one can use the following remarkable description of the elements of $S$: each of the $2^{8-3}-4=28$ elements in $S$ that are not $IIIIIIII$, $XXXXXXXX$, $ZZZZZZZZ$, or $YYYYYYYY$, contain exactly 2 $I$'s, 2 $X$'s, 2 $Z$'s, and 2 $Y$'s as tensor factors. Moreover, for any pair of integers $1\leq i<j\leq 8$, exactly one of those 28 elements has $I$'s as tensor factors in slots $i$ and $j$.

To calculate $\Autclif(C)$, we use the fact that $\Autclif(C)\supseteq\Autweak(C)$ is 2-transitive, so it suffices to find the stabilizer of slots 1 and 2. A tedious check by casework (using the positions of $I$ tensor factors, as well as the above observations about $S$, as our guide) shows that this stabilizer is of order 3 and generated by $(367)(458)$. Hence $\Autclif(C)$ is a group of order $56\cdot3=168$. We may identify $\Autclif(C)$ as the 1-dimensional affine \textit{semilinear} group $\AGamL(1,8)$, which is a semidirect product $\AGL(1,8)\rtimes\Aut(\F_8)$.

\subsubsection{An $[[8,2,3]]$ code}\label{sec2.2.5}

It turns out that the $[[8,3,3]]$ code mentioned in \ref{sec2.2.4} has an $[[8,2,3]]$ subcode $C$, which is the smallest-sized code encoding 2 logical qubits that can correct any single-qubit error. Its stabilizing subgroup $S$ is generated by
\begin{align*}
&XIZIYZXY\\
&IXZZYXYI\\
&IZXIYYZX\\
&IZZYZXZZ\\
&IIZIIIYX\\
&ZZZZZZZZ.
\end{align*}
Note that the product of the fourth and fifth generators of $S$ give the fourth generator of the aforementioned $[[8,3,3]]$ code.

Now, any element of $\Autstr(C)$ or $\Autstr(C)$ sends the generator $IIZIIIYX$ to another weight-3 element in $S$ (up to sign). By analyzing the weight-3 elements in $S$ (there are only 8 of them), we can conclude that the only nontrivial element in both $\Autstr(C)$ and $\Autweak(C)$ is $(13)(25)(48)(67)$. So although the strong automorphism group of the above $[[8,3,3]]$ code was sharply 1-transitive, and the weak automorphism group of the same code was sharply 2-transitive, neither $\Autstr(C)\cong\Z_2$ nor $\Autweak(C)=\Autstr(C)$ are transitive.

\subsubsection{A $[[10,0,4]]$ code}\label{sec2.2.6}

Consider the following generators of a stabilizing subgroup $S\subset G_{10}$:
\begin{align*}
&XIIZXZXZII\\
&IXIIZXZXZI\\
&IIXIIZXZXZ\\
&ZIIXIIZXZX\\
&XZIIXIIZXZ\\
&ZXZIIXIIZX\\
&XZXZIIXIIZ\\
&ZXZXZIIXII\\
&IZXZXZIIXI\\
&IIZXZXZIIX
\end{align*}

These ten elements are cyclic permutations of each other. It can be checked that $S$ determines a degenerate $[[10,0,4]]$ code $C$, and from \cite{codetables} we see that 4 is the best possible distance for a $[[10,0]]$ code. Note that this is not the same $[[10,0,4]]$ code given at \cite{codetables}.

Now, we will later see (Theorems \ref{M12thm} and \ref{M11thm}) that there are no nondegenerate 12-qubit (resp. 11-qubit) stabilizer codes with strong or weak automorphism group $M_{12}$ (resp. $M_{11}$). This suggests that the Clifford-twisted automorphism group might be the ``correct" automorphism group to investigate, if we are trying to find some connections between stabilizer codes and the small Mathieu groups. This code provides some evidence of such a connection: we have
\[
\Autclif(C)=\gen{(1~2~3~4~5~6~7~8~9~10),(8~9)(4~10)(5~6)}\cong M_{10}.2,
\]
a 3-transitive group of order 1440. This group is also isomorphic to the 2-dimensional projective semilinear group $\PGamL(2,9)$, as well as to the automorphism group of the unique (up to isomorphism) $(3,4,10)$ Steiner system.

It is evident that $(1~2~3~4~5~6~7~8~9~10)\in\Autclif(C)$, since it is even a strong automorphism. To see that $(8~9)(4~10)(5~6)\in\Autclif(C)$, one applies this permutation to the above ten generators and then twists by $(XZ)$ in the 5th, 6th, 8th, and 9th tensor factors (in the other tensor factors, we do nothing). Recall from the discussion surrounding Definition \ref{cliffordGroup} that this $(XZ)$ twist is, up to sign, the Hadamard gate. The resulting elements of $G_{10}$ will all be in $S$. Since $M_{10}.2$ is a maximal subgroup of $S_{10}$, it remains to show that some transposition, say $(13)$, is not in $\Autclif(C)$. This is not too hard to do using the program in Appendix \ref{SAGEcode}: one way is to apply $(13)$ to the first and third generators listed above (so they become $IIXZXZXZII$ and $XIIIIZXZXZ$, respectively), and show that the resulting two elements of $G_{10}$ cannot be simultaneously twisted into elements of $S$.

The general heuristic for the construction of this code is as follows: somehow the code should be related to the unique $(3,4,10)$ Steiner system, since as mentioned above $M_{10}.2$ is the automorphism group of that Steiner system. From \cite{SteinerWolfram} we obtain a description of the blocks of such a system, and we place $I$'s in tensor factors corresponding to ten of those blocks, which will be cyclic permutations of each other. In some way, this makes sense to try, since the identity matrices act as ``distinguished elements": $X$'s, $Y$'s, and $Z$'s can all be twisted into each other, but that is not true with $I$'s. The rest of the construction consists of playing around with the remaining six tensor factors and eventually coming upon the ``right ones".

As a final remark, the Clifford-twisted automorphism group of this code is indeed the most interesting: both the strong and weak automorphism groups of $C$ are \\$\gen{(1~2~3~4~5~6~7~8~9~10),(2~10)(3~9)(4~8)(5~7)}\cong D_{20}$.

\subsubsection{An $[[11,1,5]]$ code}\label{sec2.2.7}

Table 8.5 of \cite{Gottesman1997} gives the generators for the stabilizing subgroup $S$ of an $[[11,1,5]]$ code $C$ as follows:
\begin{align*}
&ZZZZZZIIIII\\
&XXXXXXIIIII\\
&IIIZXYYYYXZ\\
&IIIXYZZZZYX\\
&ZYXIIIZYXII\\
&XZYIIIXZYII\\
&IIIZYXXYZII\\
&IIIXZYZXYII\\
&ZXYIIIZZZXY\\
&YZXIIIYYYZX
\end{align*}
This code is significant because it is the smallest-sized code that can correct arbitrary errors on any two qubits (see \cite{codetables}). $S$ is slightly too large to analyze using the aforementioned techniques, or via brute-force using the program in Appendix \ref{SAGEcode}; however, we strongly suspect that both $\Autstr(C)$ and $\Autweak(C)$ are trivial.
\begin{question}\label{q2}
Are the strong and weak automorphism groups of $C$ trivial?
\end{question}

\section{Stabilizer Codes with Highly Transitive Automorphism Groups}\label{sec3}

In this section, we try to investigate the general question of ``\emph{how symmetric can a good stabilizer code be?}" We will focus on stabilizer codes that have distance at least 3 (so they can at least correct one error on any qubit, hence ``good"), and we will try to find those codes with multiply transitive strong and weak automorphism groups (hence ``symmetric").

The examples in Section \ref{sec2.2} suggest that in general, the strong and weak automorphism groups of a good code cannot be ``too symmetric". For instance, the highest degree of transitivity we observed was 2-transitive, such as in the $[[7,1,3]]$ Steane code with strong and weak automorphism group $\PGL(3,2)$ (Section \ref{sec2.2.3}), and in a $[[8,3,3]]$ code with weak automorphism group $\AGL(1,8)$ (Section \ref{sec2.2.4}). We believe that this is a good heuristic, and we now provide some results supporting this point of view.

\begin{theorem}\label{SnAutGroup}
An $[[n,k,d]]$ stabilizer code $C$ with $S_n$ or $A_n$ as its strong or weak automorphism group has $d\leq2$.
\end{theorem}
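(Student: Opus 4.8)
The plan is to argue by contradiction: suppose $C=C(S)$ has $d\geq 3$ and its strong (or weak) automorphism group $G$ contains $A_n$. The key structural fact is that $d\geq 3$ forces every nonidentity element of the stabilizer $S$ to have weight $\geq 3$ (this is the distance condition for $k\geq 1$, and Remark \ref{distk=0} for $k=0$; in either case no weight-$1$ or weight-$2$ element lies in $S$, and in the $k\geq 1$ case none lies in $N(S)-S$ either). I would combine this with the orbit-size constraint coming from transitivity of $G$ on the $n$ tensor positions: if $g\in S$ is a nonidentity element of minimal weight $w$ (so $3\leq w\leq n$), then the $G$-orbit of $g$ consists of elements of $S$ (in the strong case) or of $S\cup -S$ (in the weak case, invoking Proposition \ref{weakAutProp}), all of weight $w$. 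Because $G\supseteq A_n$ acts highly transitively on positions, this orbit is large, and I want to show it is \emph{too} large: the products of these conjugates, or the conjugates themselves, are forced to generate elements of small weight, contradicting $d\geq 3$.

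The cleanest route is to pin down the support structure. Let $g\in S$ have minimal weight $w$ with support (set of non-identity positions) $T$, $|T|=w\leq n$. First I would handle small $w$: if $w\leq n-2$, pick $\sigma\in A_n$ that moves $T$ to a set $T'$ overlapping $T$ in exactly $w-1$ positions (possible since $A_n$ with $n\geq w+2\geq 5$ acts $3$-transitively, in fact it can realize any permutation of positions up to an even correction, and the parity can be absorbed by also permuting two positions outside $T\cup T'$). Then $\sigma(g)\in S\cup -S$ has support $T'$, so $g\cdot\sigma(g)$ (up to sign) lies in $S$ and is supported on the symmetric difference $T\triangle T'$, which has size $2$, unless the two Pauli entries at the shared position $T\cap T'$ happen to cancel to give an even smaller support. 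Either way we get a nonzero element of $S$ of weight $\leq 2$: if $T\triangle T'$ has size $2$ and the overlap entry survives, weight $2$; if it cancels, weight could drop further but stays $\leq 2$; and one must rule out the degenerate possibility that $g\cdot\sigma(g)=\pm I$, which would mean $\sigma(g)=\pm g$, impossible since $T\neq T'$. This contradicts $d\geq 3$. The remaining case is $w\in\{n-1,n\}$, where I cannot slide the support to an overlapping set of the same size inside $n$ positions. Here I would instead use two well-chosen conjugates $\sigma(g),\tau(g)$ and look at $g\cdot\sigma(g)\cdot\tau(g)$, or directly exploit that $S$ itself, being closed under the $A_n$-action up to sign, must then be generated by elements whose supports are all of size $n-1$ or $n$; a short counting/linear-algebra argument (the supports of size $n-1$ are the complements of single points, and $A_n$ permutes these transitively) shows products of two such elements again land in weight $\leq 2$ unless everything collapses, and I would rule out the collapse using that distinct supports give distinct nonidentity group elements.

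The main obstacle I anticipate is the bookkeeping in the large-weight cases $w=n-1$ and $w=n$, and more subtly the \emph{cancellation} phenomenon: when I multiply $g$ by a conjugate, the Pauli entries at shared support positions may multiply to $I$, so controlling the resulting weight requires tracking not just supports but the actual Pauli labels (i.e. the check-matrix vectors in $\F_2^{2n}$ and the symplectic form \ref{sympForm}). I would likely reformulate the whole argument in the check-matrix picture: $S$ is a self-orthogonal subspace of $\F_2^{2n}$, the $S_n$-action permutes the $n$ coordinate pairs, and "weight $\leq 2$" is an explicit low-Hamming-type condition; then the contradiction becomes the statement that an $A_n$-invariant (up to the sign coset, which disappears in $\F_2^{2n}$) self-orthogonal subspace containing a vector of support-size $w\leq n$ must contain one of support-size $\leq 2$. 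A clean way to get uniformity across all $w$ is: average/sum over an orbit, or use that $A_n$-invariance forces the minimal-support vectors to have a very rigid combinatorial type, and the only self-orthogonal $A_n$-invariant subspaces of $\F_2^{2n}$ are essentially spanned by the all-$X$, all-$Z$, all-$Y$ vectors and coordinate-wise constants — all of which contain weight-$2$ elements or fail self-orthogonality. Establishing that classification of $A_n$-invariant self-orthogonal subspaces is, I expect, the real content and the step that will take the most care; everything else is then immediate from $d\geq 3$.
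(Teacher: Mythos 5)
There are two genuine gaps. First, your ``key structural fact'' is not the paper's definition of distance: for $k\geq 1$, $d$ is the minimum weight over $N(S)-S$ (Definition \ref{dist}), not over $S$, so $d\geq 3$ does \emph{not} forbid weight-$1$ or weight-$2$ elements of the stabilizer itself (degenerate codes such as Shor's $[[9,1,3]]$ code have weight-$2$ stabilizer elements). Consequently, even if your argument succeeds in producing a weight-$\leq 2$ element of $S$, there is no contradiction with $d\geq 3$ in the $k\geq 1$ case. The paper's proof is careful on exactly this point: transitivity is used either to place a low-weight element in $N(S)$ and argue it lies in $N(S)-S$ (bounding $d$ directly), or to show that $S$ is forced to contain so many low-weight elements that $\lvert S\rvert=2^n$, i.e.\ $k=0$, at which point the separate convention of Remark \ref{distk=0} gives $d\leq 2$. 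Your proposal never makes this bridge from ``small-weight element of $S$'' to ``small $d$''.

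Second, the central step --- sliding the support $T$ of a minimal-weight $g$ to an overlapping $T'$ and multiplying --- only controls supports, not Pauli labels: $g\cdot\sigma(g)$ is supported on $T\triangle T'$ only when the labels of $g$ and $\sigma(g)$ \emph{agree} on all of $T\cap T'$; otherwise its weight can be as large as $w+1$ (you state the implication backwards, as if non-cancellation were the favorable case). Forcing that agreement is exactly where the full $S_n$/$A_n$-invariance of the labels must be used, and your fallback --- a classification of $A_n$-invariant self-orthogonal subspaces of $\F_2^{2n}$ --- is neither carried out nor correctly stated (for instance, the even-weight all-$X$-type subspace is $S_n$-invariant and self-orthogonal but is not spanned by constant vectors; there are also ``diagonal'' invariant subspaces). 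You yourself flag this classification as ``the real content,'' and it is precisely what is missing. By contrast, the paper avoids all of this with an elementary invariant (the ``complexity'' of $S$, Definition \ref{complexityDef}): casework on how many distinct Pauli letters occur in $S$, using only transitivity, $3$-cycles, and double transpositions, plus the $k=0$ convention, which is a much lighter mechanism than the module-theoretic classification your route requires.
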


We make a useful definition:
\begin{definition}\label{complexityDef}
Let $T$ be a subset of $E_n$. Then the \textbf{complexity} of $T$ is the number of different Pauli matrices ($I$, $X$, $Y$, or $Z$) that appear as tensor factors in elements of $T$ (so the complexity is an integer in $\{1,2,3,4\}$). We define the complexity of an element similarly. For example, the subset $T=\{IXX, YYZ\}$ has complexity 4, even though both elements $IXX$ and $YYZ$ have complexity 2.
\end{definition}

\begin{proof}[Proof of Theorem \ref{SnAutGroup}]
Let $C$ be the stabilizer code in question with parameters $[[n,k,d]]$, and let $S\subset G_n$ be its stabilizing subgroup. Note that any stabilizer code with distance at least 3 must be able to correct an arbitrary error on a single qubit, which means it must be at least 5 physical qubits large (see, for example, the quantum Hamming bound discussed in Section 10.3.4 of \cite{Nielsen2011}). So we may assume that $n\geq5$.

We prove the theorem in the case $\Autweak(C)\cong S_n$ or $A_n$, and the proof for $\Autstr(C)$ is essentially the same.

Now, $S$ has complexity 1, 2, 3, or 4. If $S$ has complexity 1, then it is the trivial subgroup, so we can disregard this case. If $S$ has complexity 2, then without loss of generality every element $s\in S$ is a tensor product of $I$'s and $X$'s. Then $XI\ldots I$ is in the normalizer $N(S)$, so either it is in $S$ or $d=1$. The same applies for $IXI\ldots I$, $IIXI\ldots I$, etc., so either $S$ contains them all or $d=1$. In the former case, $S$ will have size $2^n$, so that $k=0$. In this case, $d$ still equals 1, applying the distance convention for zero-dimensional stabilizer codes as discussed in Remark \ref{distk=0}.

If $S$ has complexity 3, then without loss of generality, every element $s\in S$ is a tensor product of $I$'s, $X$'s, and $Z$'s. Suppose $s_1\in S$ has an $X$ in slot $i$ (the $i$th tensor factor) and $s_2\in S$ has a $Z$ in slot $j$. Then we may choose some permutation $\sigma\in\Autweak(C)$, which is either $S_n$ or $A_n$, such that $\sigma(i)=j$. So plus or minus $\sigma(s_1)$ is in $S$, and taking the positive sign (without loss of generality), we see that $\sigma(s_1)s_2\in S$. But this element has a $Y$ in slot $j$, contradicting the complexity of $S$.

If $S$ has complexity 4, we first claim that no element in $S$ can have complexity 3 or 4. If $s\in S$ had complexity 3 or 4, then it has $I$, $X$, and $Z$ as tensor factors somewhere, say \\$s=\ldots I\ldots X\ldots Z\ldots$ (without loss of generality, since it will be apparent that it does not matter where these tensor factors are located). Then applying an appropriate 3-cycle $\sigma\in A_n\subseteq\Autweak(C)$, we get $\sigma(s)=\ldots Z\ldots I\ldots X\ldots$, so that plus or minus $\sigma(s)$ is in $S$. In either case, the point is that $\sigma(s)$ and $s$ do not commute, a contradiction.

So elements in $S$ have complexity at most 2. Suppose we had some complexity 2 element $s$, which is, without loss of generality, a tensor product of $I$'s and $X$'s only, with at least 1 copy of each. Let $s$ have an $I$ in slot $i$ and an $X$ in slot $j$. Now, because we assume $n\geq5$, there are at least 3 more slots, so either $I$ or $X$ appears in the remaining slots at least twice. Without loss of generality, suppose $I$ appears in slots $k$ and $l$ of $s$, where $i, j, k, l$ are pairwise distinct. Consider the permutation $\sigma=(ij)(kl)\in A_n\subseteq\Autweak(C)$, so that $\sigma(s)$ has the effect of only switching the $I$ in slot $i$ and the $X$ in the slot $j$ (the $I$'s in slots $k$ and $l$ are invisibly switched). Then plus or minus $\sigma(s)$ is in $S$, so taking the positive sign, we see that $t\coloneqq\sigma(s)s\in S$ has an $I$ in every slot of the tensor product, except in slots $i$ and $j$, where it has $X$'s.

Then because $n$ is at least 5, by applying elements in $\Autweak(C)$, it follows that any tensor product of $I$'s and exactly two $X$'s is in $S$, up to sign. Then the elements \\$\{\pm XXI\ldots I, \pm IXXI\ldots I,\ldots, \pm I\ldots IXX\}$ are independent in $S$, so $S$ has size at least $2^{n-1}$. But the complexity 4 hypothesis means that $S$ contains some term with a $Y$ or $Z$, and that element is certainly independent from the above set, so $S$ must have size exactly $2^n$. So $k=0$, and by the distance convention for zero-dimensional stabilizer codes, $d\leq2$.

The only case we have not considered is where all elements in $S$ have complexity 1. Then $S$ is a subgroup contained in $\{\pm I\ldots I, \pm X\ldots X, \pm Z\ldots Z, \pm Y\ldots Y\}$, and $XXI\ldots I$ is a weight 2 element in $N(S)-S$, so $d=2$.
\end{proof}

The classification of finite simple groups tells us that the only $k$-transitive groups for $k\geq6$ are $S_n$ and $A_n$ for large enough $n$. We also know that the only 5-transitive groups are the Mathieu groups $M_{24}$ and $M_{12}$, and the only 4-transitive groups are $M_{23}$ and $M_{11}$. Therefore it makes sense to discuss these groups next. We will now deal with the small Mathieu groups, starting with $M_{12}$.

\begin{theorem}\label{M12thm}
There is no $k\geq1$ stabilizer code with ambient space $\C^{2^{12}}$ that has strong or weak automorphism group $M_{12}$.
\end{theorem}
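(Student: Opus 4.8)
The plan is to recast the problem in terms of the $\F_2[M_{12}]$-module structure of the $12$-point permutation module, so that the only input from the classification of $k$-transitive groups is the sharp $5$-transitivity of $M_{12}$. Suppose $C$ were such a code, with stabilizing subgroup $S\subset G_{12}$ and $k\ge 1$, and let $\bar S=\pi(S)\subset P_{12}$. Using the check-matrix representation \ref{checkMatrix} to write $P_{12}=\F_2^{12}\oplus\F_2^{12}$ (first summand the $X$-parts, second the $Z$-parts), a permutation acts on $P_{12}$ by permuting the coordinates of the two summands simultaneously, so $P_{12}\cong V\oplus V$ as an $\F_2[S_{12}]$-module, where $V=\F_2^{12}$ is the natural permutation module. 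Since $M_{12}\subseteq\Autweak(C)$, Corollary \ref{weakAutcor} shows $\bar S$ is $M_{12}$-stable, hence an $\F_2[M_{12}]$-submodule of $V\oplus V$; by Proposition \ref{stabCodeDim} its $\F_2$-dimension is $12-k\le 11$ (it is also isotropic for the form \ref{sympForm}, but this will not be needed). The two facts I would use are: (a) $M_{12}$ is transitive on the $12$ coordinates, with no nonidentity element fixing five of them; and (b) the submodule lattice of $V$ is the chain $0\subset\langle\mathbf{1}\rangle\subset\mathcal E\subset V$, where $\mathcal E$ is the even-weight code and $\mathcal E/\langle\mathbf{1}\rangle$ is irreducible of dimension $10$, so that every proper nonzero submodule of $V$ has dimension $1$ or $11$ and $\mathcal E$ has no nontrivial $\F_2[M_{12}]$-module automorphisms. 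Fact (b) is the mod-$2$ reflection of the fact that the complex $12$-point permutation module is the sum of the trivial and a degree-$11$ irreducible, and may be quoted from the literature on the modular representations of Mathieu groups.

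The main step is a dimension count. For the coordinate projections $p_X,p_Z\colon\bar S\to V$, the images $\bar S_X,\bar S_Z$ are $\F_2[M_{12}]$-submodules of $V$ --- hence each one of $0,\langle\mathbf{1}\rangle,\mathcal E,V$ --- and so are their kernels, the "pure" subspaces $K_Z=\bar S\cap(0\oplus V)$ and $K_X=\bar S\cap(V\oplus 0)$. Rank-nullity gives $\dim\bar S=\dim\bar S_X+\dim K_Z=\dim\bar S_Z+\dim K_X$; since $\dim\bar S\le 11$, neither $\bar S_X$ nor $\bar S_Z$ can be $V$, so each of $\dim\bar S_X,\dim\bar S_Z,\dim K_X,\dim K_Z$ lies in $\{0,1,11\}$, and two of them summing to at most $11$ forces $\dim\bar S\in\{0,1,2,11\}$. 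Going through the possibilities, $\bar S$ must be one of: $0$; one of the three lines $\langle(\mathbf{1}\mid 0)\rangle$, $\langle(0\mid\mathbf{1})\rangle$, $\langle(\mathbf{1}\mid\mathbf{1})\rangle$; the plane $\langle(\mathbf{1}\mid 0),(0\mid\mathbf{1})\rangle$; $\mathcal E\oplus 0$; $0\oplus\mathcal E$; or $\{(a\mid\phi(a)):a\in\mathcal E\}$ for an $\F_2[M_{12}]$-module automorphism $\phi$ of $\mathcal E$ --- and by (b) the last case reduces to $\bar S=\{(a\mid a):a\in\mathcal E\}$.

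Each of these eight subsets of $V\oplus V$ is $S_{12}$-stable (the symmetric group fixes the all-ones vector and preserves weight parity), so in every case $\Autweak(C)=S_{12}\ne M_{12}$, contradicting the hypothesis and finishing the weak case. For the strong case, $M_{12}=\Autstr(C)\subseteq\Autweak(C)$, so $\bar S$ still lies in the list above and $\Autweak(C)=S_{12}$; it remains to rule out $\Autstr(C)=M_{12}$. For the line and plane cases one has $S\subseteq\{\pm I^{\otimes 12},\pm X^{\otimes 12},\pm Y^{\otimes 12},\pm Z^{\otimes 12}\}$ and every permutation fixes each of these matrices on the nose, so $\Autstr(C)=S_{12}$. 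For each of the three even-weight-pattern codes, $S$ consists of the elements $\epsilon(a)P_a$ for $a\in\mathcal E$, where $P\in\{X,Y,Z\}$, $P_a$ has $P$ in the coordinates of $\mathrm{supp}(a)$, and $\epsilon\colon\mathcal E\to\{\pm1\}$ is a homomorphism; then $\Autstr(C)$ is the stabilizer of $\epsilon$ under the $S_{12}$-action on the $2^{11}=2048$ homomorphisms $\mathcal E\to\{\pm1\}$, and since $[S_{12}:M_{12}]=5040>2048$ no such stabilizer can have index $5040$, so $\Autstr(C)\ne M_{12}$.

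I expect the crux to be the "complexity $4$" configuration in the sense of Theorem \ref{SnAutGroup} --- the case in which $S$ really uses all four Pauli types, which here is exactly the case where both $\bar S_X$ and $\bar S_Z$ are nonzero and $\bar S$ is not pure in either coordinate. For $S_n$ and $A_n$ that case was dispatched by conjugating a fixed stabilizer element by a $3$-cycle while leaving all other tensor slots untouched, but $M_{12}$, being sharply $5$-transitive, contains no $3$-cycle (indeed no nonidentity element fixing nine points), so that device is unavailable; the substitute is fact (b), and pinning down the dimension bookkeeping around it --- together with verifying (b) itself --- is where the real work lies. A pleasant by-product of the argument is the bound $k\in\{1,10,11,12\}$ for any $12$-qubit stabilizer code whose weak automorphism group contains $M_{12}$.
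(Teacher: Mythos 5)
Your argument is correct in its essentials, but it takes a genuinely different route from the paper. The paper's proof is elementary and combinatorial: it uses only the $5$-transitivity of $M_{12}$, splits into cases by the ``complexity'' of $S$, performs the ``reduction to a weight $2$ element'' trick, produces a transposition in $\Autstr(C)$ by a sign argument in the complexity-$2$ case, and in the complexity-$4$ case imports the distance bound from the code tables (a $[[12,k\geq1]]$ code has an element of $N(S)-S$ of weight at most $6$) to reach a contradiction. You instead observe that $\bar S=\pi(S)$ is an $\F_2[M_{12}]$-submodule of $V\oplus V$ and classify the possibilities via the submodule structure of the permutation module $V$; this buys more than the paper proves (an explicit list of the possible $\bar S$, the conclusion that $\Autweak(C)$ would be all of $S_{12}$, the constraint on $k$, and the clean orbit-counting finish $[S_{12}:M_{12}]=5040>2^{11}$ in the strong case), and it needs no distance tables. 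The bookkeeping itself is sound: the dimension count, the eight cases, the identification of the graph case with an automorphism of $\mathcal E$, and the sign/homomorphism analysis of the pattern codes all check out (two small touch-ups: the $\bar S=0$ case should be swept into the strong-case discussion, and the advertised by-product $k\in\{1,10,11,12\}$ requires the standing hypothesis $k\geq1$, since for $k=0$ the option $\bar S_X=V$ is no longer excluded).

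The one load-bearing step you do not establish is fact (b). It is true, but it is a genuinely nontrivial input from the $2$-modular representation theory of $M_{12}$: what is needed is that the heart $\mathcal E/\langle\mathbf 1\rangle$ is \emph{absolutely} irreducible over $\F_2$ (the $10$-dimensional $2$-modular constituent, with endomorphism ring $\F_2$); from that, transitivity and perfectness of $M_{12}$ give the chain $0\subset\langle\mathbf 1\rangle\subset\mathcal E\subset V$, and the endomorphism-ring statement is what forces $\mathrm{Aut}_{\F_2[M_{12}]}(\mathcal E)=1$ in your diagonal case. Your heuristic via the complex decomposition $12=1+11$ does not by itself justify this---the degree-$11$ complex irreducible is itself reducible mod $2$ (it splits as $1+10$), which is exactly why the mod-$2$ lattice has to be determined separately. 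So to make the proof complete you must either cite a precise source for the uniserial structure $1\,|\,10\,|\,1$ of the mod-$2$ permutation module of $M_{12}$ on $12$ points (with the $10$ absolutely irreducible) or prove it directly; by contrast, the paper's only external input is the code-table distance bound, with everything else done by hand.
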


\begin{proof}
We prove the statement in the $\Autstr$ case, since the $\Autweak$ case only gives us increased freedom to make sign changes in elements of the stabilizer group, which doesn't make a difference. In particular, in all that follows, we will mean ``equal up to a sign $\pm$" when we say ``equal", unless stated (we will only need to deal with signs in a few places). The only fact about $M_{12}$ we will need is that it is 5-transitive as a permutation group on 12 points.

Let $C$ be a stabilizer code (encoding at least 1 logical qubit) corresponding to the stabilizer subgroup $S\subset E_{12}$ with $\Autstr(C)\supseteq M_{12}$. We prove that this inclusion must be strict.

Disregarding the trivial case, the complexity of $S$ must be 2, 3, or 4. First, if the complexity of $S$ is 2, then without loss of generality, all elements are tensor products of $X$'s and $I$'s. Pick some $s\in S$ not equal to $I$ or $X\ldots X$ (if there is no such element, then $\Autstr(C)\cong S_{12}$), so that $s$ has at most 6 $X$'s as tensor factors. Note that we could change the $X$'s and $I$'s and the argument still works. We can assume without harm that the $X$'s are located consecutively in the first slots, so $s$ is one of the following, up to sign:
$$\{XI\ldots I, XXI\ldots I, XXXI\ldots I, XXXXI\ldots I, XXXXXI\ldots I, XXXXXXIIIIII\}.$$
It does not actually matter where the $X$'s are located, since the argument can be easily modified for any position of the $X$'s.

Because $M_{12}$ is 5-transitive, we may produce one of the elements in the below list by applying a permutation from $\Autstr(C)$ to $s$. The dots mean that we do not exactly know the order of the remaining tensor factors, since we may only specify the images of 5 points (qubits).
$$\{IXI\ldots, XIXI\ldots, XXIXI\ldots, XXXIX\ldots, XXXXI\ldots, XXXXI\ldots\}.$$
Let $s'\in S$ be the corresponding element to $s$. Notice that in each case, $ss'$ is an element with weight 2, except in the last case, where $ss'$ could have weight 4. But in the last case we can just repeat this procedure to produce some element of weight 2. Let us call this process \textbf{reduction to a weight 2 element}, since we will want to refer to it later. We have essentially shown that if $S$ contains an element of weight at most 6, then it contains an element of weight 2. 

So $S$ contains some element of weight 2. By 5-transitivity of $\Autstr(C)\supseteq M_{12}$, $S$ contains all tensor products of $I$'s and $X$'s of weight 2, of which the 11 elements \\$\{XXI\ldots I, IXXI\ldots I,\ldots, I\ldots IXX\}$ are independent. Since $k\geq1$, $S$ is generated by exactly these elements. Now, notice that if we apply the appropriate element in the 5-transitive group $\Autstr$ to $IXXI\ldots I$, we get $XIXI\ldots I\in S$ \emph{with the same sign as $IXXI\ldots I$}. Multiplying these, we see that $+XXI\ldots I\in S$ (with the positive sign added for emphasis). Then $XIXI\ldots I\in S$, the product of the first two generators, has the same sign as $IXXI\ldots I\in S$, so the transposition $(12)$ is in $\Autstr$ (it fixes the rest of the generators). Therefore $\Autstr(C)\not=M_{12}$, since $M_{12}$ does not contain any transposition. This finishes the discussion of the complexity 2 case.

As in the proof regarding $S_n$ or $A_n$ automorphism group, transitivity of $\Autstr(C)$ shows that $S$ cannot have complexity 3.

It remains to discuss the complexity 4 case. First, we show that there cannot be an element in $S$ with complexity 2. This argument essentially follows the above work where $S$ has complexity 2: given an element $s\in S$ with complexity 2, we may reduce to a weight 2 element $t\in S$ that only has $I$'s and one other Pauli matrix as tensor factors (say, $X$). Then the same argument shows that $S$ would have to be generated by $\{XXI\ldots I, IXXI\ldots I,\ldots, I\ldots IXX\}$ (by the $k\geq1$ hypothesis). This is a contradiction to $S$ being complexity 4.

So there is an element in $S$ with complexity at least 3 (if all elements in $S$ had complexity 1, then $\Autstr(C)\cong S_{12}\not=M_{12}$). We claim that there is in fact an element in $S$ with complexity 4. Suppose the first three slots of $s\in S$ are $XIZ$ (as usual, it doesn't matter what order they are in, what slots they are in, or what Pauli matrices they are, as long as they are pairwise distinct). There are four possibilities:
\begin{enumerate}
\item $s$, interpreted as a string of Pauli matrices, begins $XIZI\ldots$. By 5-transitivity of $\Autstr(C)$ there is some $s'\in S$ that begins $ZXII\ldots$. Then their product $ss'$ begins $YXZI\ldots$, the desired element of complexity 4.
\item $s$ begins $XIZX\ldots$. By 5-transitivity of $\Autstr(C)$ there is some $s'\in S$ that begins $IZXX\ldots$. Then their product $ss'$ begins $XZYI\ldots$, the desired element of complexity 4.
\item $s$ begins $XIZZ\ldots$. By 5-transitivity of $\Autstr(C)$ there is some $s'\in S$ that begins $IZXZ\ldots$. Then their product $ss'$ begins $XZYI\ldots$, the desired element of complexity 4.
\item $s$ begins $XIZY\ldots$. This already has complexity 4.
\end{enumerate}
Therefore we may assume without loss of generality that $s$ has complexity 4, and that the first four tensor factors of $s$ are $XZYI$.

Now, we look at elements of $N(S)-S$. By the bounds found at \cite{codetables}, $N(S)-S$ contains an element of weight at most 6. We turn to looking at the possibilities case by case:

\begin{enumerate}
\item Weight 1:
\begin{enumerate}
\item $XI\ldots I\not\in N(S)$ because it does not commute with $ZIXY\ldots$. A similar argument shows that $YI\ldots I$, $ZI\ldots I\not\in N(S)$.
\end{enumerate}
\item Weight 2:
\begin{enumerate}
\item $XXI\ldots I\not\in N(S)$ because it does not commute with $ZIXY\ldots$. A similar argument shows that $YYI\ldots I$, $ZZI\ldots I\not\in N(S)$.
\item $XZI\ldots I\not\in N(S)$ because it does not commute with $ZIXY\ldots$. A similar argument shows that $ZXI\ldots I, XYI\ldots I, YXI\ldots I, YZI\ldots I, ZYI\ldots I\not\in N(S)$.
\end{enumerate}
\item Weights 3 through 6: it is easily seen that if $g\in N(S)$, then $\sigma(g)\in N(\sigma(S))=N(S)$ for $\sigma\in\Autstr(C)$ (the same argument works for $\Autweak(C)$, since in that case $\sigma(S)$ only differs from $S$ by signs). Then if $g\in N(S)$ has weight 3, 4, 5, or 6, we may reduce to an element $g'\in N(S)$ of weight 2. This falls into the above case.
\end{enumerate}

In all cases, we get a contradiction, so we are done.

\end{proof}


\begin{remark}\label{rmkB}
This method fails with $M_{23}$ and $M_{24}$. We know that there is a $[[23,1,7]]$ stabilizer code that at least has strong automorphism group $M_{23}$; this is the CSS code generated from the classical $[23,12,7]$ Golay code (see section 7.15.4 of \cite{Preskillnotes}. There should also be a $[[24,0,8]]$ ``code" that is the CSS code generated from the classical $[24, 12, 8]$ extended Golay code.
\end{remark}



One can use the same method to show the following:

\begin{theorem}\label{M11thm}
There is no $k\geq1$ stabilizer code with ambient space $\C^{2^{11}}$ that has strong or weak automorphism group $M_{11}$.
\end{theorem}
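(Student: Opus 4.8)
The plan is to run the proof of Theorem \ref{M12thm} essentially verbatim, making three substitutions: the $5$-transitivity of $M_{12}$ on $12$ points is replaced by the (sharp) $4$-transitivity of $M_{11}$ on $11$ points; the Pauli group $E_{12}$ by $E_{11}$; and the input from \cite{codetables} that a $[[12,k]]$ code with $k\ge 1$ has distance at most $6$ by the analogous fact that every $[[11,k]]$ stabilizer code with $k\ge 1$ has distance at most $5$ (this bound is attained by the $[[11,1,5]]$ code of Section \ref{sec2.2.7}). As before it suffices to treat $\Autstr$, to work up to signs, and to assume $M_{11}\subseteq\Autstr(C)$ and show the inclusion is strict; we split on the complexity of the stabilizer group $S\subset E_{11}$.

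Complexity $1$ gives $\Autstr(C)\cong S_{11}$ at once. For complexity $2$, say every element of $S$ is a product of $I$'s and $X$'s, and pick a minimal-weight element $s\in S$ different from $I$ and $X\cdots X$ (if there is none, $\Autstr(C)\cong S_{11}$); multiplying $s$ by $\sigma(s)$ for a suitable $\sigma\in\Autstr(C)$ yields a nonzero element whose support is the symmetric difference of the two supports, and $4$-transitivity realizes $\sigma$ by a prescription on four coordinates --- of the support of $s$ when $\weight(s)\le 4$, of its complement when $\weight(s)\ge 7$, and of whichever of the two has size $5$ in the remaining cases --- which drops the weight to $2$ in the extreme ranges and to at most $4$ when $\weight(s)\in\{5,6\}$. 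Iterating, $S$ picks up a weight-$2$ element (it can have no weight-$1$ element, as $k\ge 1$ would then force $\dim S=n$), so $4$-transitivity forces $S=\gen{XXI\cdots I,\, IXXI\cdots I,\, \ldots,\, I\cdots IXX}$ on these $10$ independent consecutive generators; but $(1\,2)$ fixes each generator, so $(1\,2)\in\Autstr(C)$ and $\Autstr(C)\neq M_{11}$. Complexity $3$ is impossible by bare transitivity, exactly as in the proof for $S_n$ and $A_n$: carry an $X$-slot of one element onto a $Z$-slot of another and multiply to produce a forbidden $Y$.

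Complexity $4$ is the substantive case. The reduction just used shows $S$ has no complexity-$2$ element; since not all of $S$ has complexity $1$, some element has complexity $\ge 3$, and a four-branch split on a fourth tensor factor (each branch multiplying $s$ by a suitably permuted copy to manufacture a $Y$, which only ever prescribes a permutation on four points) upgrades this to some $s\in S$ which we may take to begin $XZYI$. Finally $k\ge 1$ and $d\le 5$ force $N(S)-S$ to contain an element $g$ with $\weight(g)\le 5$, and we rule out $\weight(g)\in\{1,2,3,4,5\}$ by the bookkeeping of Theorem \ref{M12thm}: for each shape of $g$ exhibit a permuted copy of $s$ lying in $S$ that anticommutes with $g$, so $g\notin N(S)$; and if such a $g$ happened to lie in $S$ the same anticommutation contradicts $S$ being abelian. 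The weight-$1$, $2$, $3$ subcases need only four-point prescriptions, so they carry over verbatim from the $M_{12}$ argument; the weight-$4$ and weight-$5$ subcases require organizing the commutation checks so that the needed permuted copies of $s$ come from a prescription on only four slots --- choosing which four slots of $s$ (exploiting the known leading factors $X,Z,Y,I$ together with one further slot of the right Pauli type) are carried into the support of $g$ --- together with a small extra case split on the one coordinate of that copy lying in the support of $g$ which a four-point prescription determines but does not let us choose. Granting this, $\Autstr(C)\neq M_{11}$, and the $\Autweak$ statement follows identically modulo signs.

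I expect the only genuine obstacle to be this last point. Because $M_{11}$ is merely (sharply) $4$-transitive while $N(S)-S$ can force elements of weight $5$ upon us, each commutation check in the weight-$4$ and weight-$5$ subcases of the complexity-$4$ analysis must be arranged to succeed using auxiliary elements of $S$ obtained from a four-point prescription --- possibly after an additional layer of casework on the one coordinate we cannot prescribe --- whereas the $M_{12}$ proof could prescribe on all five support coordinates simultaneously. No idea beyond the $M_{12}$ argument should be required; there is simply one fewer degree of transitivity to spend, and it remains sufficient precisely because the length $n=11$ and the distance bound $5$ are correspondingly smaller.
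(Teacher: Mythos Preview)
Your proposal is correct and follows the paper's approach: rerun the $M_{12}$ argument with $4$-transitivity in place of $5$-transitivity, using $\lfloor 11/2\rfloor=5$ for the complexity-$2$ reduction and the distance bound $d\le 5$ from \cite{codetables} for the final step. The paper's own proof of Theorem \ref{M11thm} is just a one-paragraph sketch making exactly these two observations.

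One simplification you are missing: you flag the weight-$4$ and weight-$5$ subcases of the $N(S)-S$ analysis as the ``genuine obstacle'' and propose direct anticommutation checks against permuted copies of $s$, with an extra case split on the one coordinate a $4$-point prescription cannot control. This works, but it is unnecessary. The paper instead reuses the \emph{reduction to a weight-$2$ element}---the very device you already invoked in the complexity-$2$ case---on an arbitrary $g\in N(S)$ of weight $3$, $4$, or $5$. With a $4$-point prescription one fixes $w-1$ support positions of $g$ and moves one out; then $g\cdot\sigma(g)\in N(S)$ has weight exactly $2$ when $w\le 4$, and weight at most $4$ when $w=5$ (whence one iterates once). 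Since the weight-$1$ and weight-$2$ cases already exclude such elements from $N(S)$, this closes the complexity-$4$ analysis uniformly, with no additional commutation bookkeeping. So what you identify as the crux is in fact already absorbed by machinery you set up earlier.

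One small slip: $(1\,2)$ does not ``fix each generator'' of $S=\langle XXI\cdots I,\,IXXI\cdots I,\ldots,\,I\cdots IXX\rangle$; it sends $IXXI\cdots I$ to $XIXI\cdots I$. The conclusion $(1\,2)\in\Autstr(C)$ is still correct, since $XIXI\cdots I$ is the product of the first two generators and lies in $S$ with the right sign (the paper checks the sign explicitly at this point).
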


The main difference is that reduction to a weight 2 element is now only valid for elements of weight at most 5. But luckily this does not pose a problem: $\floor{\frac{11}{2}}=5$, so the arguments for complexity 2 elements still hold, and the bounds at \cite{codetables} show that $N(S)-S$ for any stabilizing subgroup $S\subset G_{11}$ contains an element of weight at most 5, so the last casework step can be repeated. Every other step in the proof of Theorem \ref{M12thm} can be done using only 4-transitivity.\\

We end this paper by presenting some questions that naturally follow from topics discussed above.

\begin{question}\label{q3}
Are Theorems \ref{M12thm} and \ref{M11thm} true if the $k\geq1$ hypothesis is dropped?
\end{question}

\begin{question}\label{q4}
Describe stabilizer codes $C$ with $\Autclif(C)\cong S_n$. Replace $S_n$ by $A_n$ and the Mathieu groups, if possible.
\end{question}

\begin{question}\label{q5}
Describe stabilizer codes $C$ with 3-transitive $\Autstr(C)$ or $\Autweak(C)$. Replace ``3-transitive" by ``2-transitive", ``transitive", and ``cyclic" if possible.
\end{question}

\begin{question}\label{q6}
Can a stabilizer code with trivial strong (resp. weak) automorphism group be arbitrarily good? That is, are there stabilizer codes with trivial strong (resp. weak) automorphism groups having arbitrarily large distance? Replace ``trivial" with ``cyclic" if possible.
\end{question}

\bibliography{Investigations_on_Automorphism_Groups_of_Quantum_Stabilizer_Codes}
\bibliographystyle{plain}

\newpage

\begin{appendices}
\section{SAGE Code for Computing Automorphism Groups}\label{SAGEcode}

Here is some SAGE code using a brute-force method to calculate the strong and weak automorphism groups of a stabilizer code. Thanks to Dr. Daniel Bump for his major assistance in writing this program.

\begin{verbatim}
"""
SAGE Code for looking for automorphisms of stabilizer quantum error
correcting codes.
"""

# For reference, x,y,z = Pauli matrices
# h,s = Hadamard and Phase gates.

X = Matrix([[0,1],[1,0]])
Y = Matrix([[0,-i],[i,0]])
Z = Matrix([[1,0],[0,-1]])
H = Matrix([[1,1],[1,-1]])
S = Matrix([[1,0],[0,i]])

# The Pauli error group is defined by generators and relations:

FG.<x,y,z,j> = FreeGroup()
PE = FG/{j^4,x^2,y^2,z^2,x*y/x/y*j^2,y*z/y/z*j^2,z*x/z/x*j^2,x*j/x/j,y*j/y/j, 
z*j/z/j,x*y/j/z,y*x*j/z,y*z/j/x,z*y*j/x,z*x/j/y,x*z*j/y}
PE.inject_variables()

def ca(a, split=False):
    """
    Canonical form for the Pauli Error group elements
    """
    for b in [PE.one(),x,y,z]:
        for k in [PE.one(),j,j^2,j^3]:
            if a == k*b:
                if split:
                    return [k,b]
                else:
                    return k*b
    return a                

def emul(A,B,debug=False):
    """
    Multiplication for Pauli error group elements in canonical form
    """
    a0 = A[0]
    b0 = B[0]
    r0 = a0*b0
    rt = []
    for [t,u] in zip(A[1:],B[1:]):
        [p,q]=ca(t*u,split=True)
        r0 *= p
        if debug:
            print (t,u,p)
        rt.append(q)
    ret = [ca(r0)]
    for t in rt:
        ret.append(t)
    return ret

def eprod(M,n):
    """
    M is a subset of the stabilizer group S
    Returns the product of the elements of M.
    """
    ret = tuple((n+1)*[PE.one()])
    for m in M:
        ret = emul(ret,m)
    return list(ret)

def unpack(st):
    """
    Pauli error group elements may be represented by strings
    and unpacked by this function.

    sage: unpack("XYZZY")
    (1, x, y, z, z, y)
    """
    ret = [PE.one()]
    for c in st:
        if c == "X":
            ret.append(x)
        elif c == "Y":
            ret.append(y)
        elif c == "Z":
            ret.append(z)
        elif c == "I":               
            ret.append(PE.one())
    return tuple(ret)

def pack(w):
    ret = ""
    for t in w[1:]:
        if t==PE.one():
            ret += "I"
        elif t==x:
            ret += "X"
        elif t==y:
            ret += "Y"
        elif t==z:
            ret += "Z"
    return ret

def Stabilizer(S):
    """
    Return the stabilizer group generated by a set of generators
    of the Pauli error group in string notation. The generators
    must commute and the generated group may not contain -I.
    """
    n = len(S[0])
    return [eprod(M,n) for M in Set(unpack(s) for s in S).subsets()]

def GenToArray(S):
    """
    Takes an array of generator strings and returns the same Pauli elements
    in standard form.
    """
    return [list(unpack(s)) for s in S]

StabTest = ["XXX", "YYI", "ZXZ"]
Stab513 = ["XZZXI","IXZZX","XIXZZ","ZXIXZ"] 
Stab604 = ["IXZZXI","IIXZZX","IXIXZZ","IZXIXZ","XXXXXX","ZZZZZZ"]
Stab713 = ["IIIXXXX", "IXXIIXX", "XIXIXIX", "IIIZZZZ", "IZZIIZZ", "ZIZIZIZ"]
Stab833 = ["XIZIYZXY", "IXZZYXYI", "IZXIYYZX", "IZIYZXXY", "ZZZZZZZZ"]
Stab823 = ["XIZIYZXY", "IXZZYXYI", "IZXIYYZX", "IZZYZXZZ", "IIZIIIYX", "ZZZZZZZZ"]
Stab933 = ["XIZIYZXYI", "IXZZYXYII", "IZXIYYZXI", "IZIYZXXYI", "ZZZZZZZZI",
 "IIIIIIIIX"]
Stab1004 = ["XIIZXZXZII", "IXIIZXZXZI", "IIXIIZXZXZ", "ZIIXIIZXZX", "XZIIXIIZXZ",
 "ZXZIIXIIZX", "XZXZIIXIIZ", "ZXZXZIIXII", "IZXZXZIIXI", "IIZXZXZIIX"]
Stab1115 = ["ZZZZZZIIIII", "XXXXXXIIIII", "IIIZXYYYYXZ", "IIIXYZZZZYX",
 "ZYXIIIZYXII", "XZYIIIXZYII", "IIIZYXXYZII", "IIIXZYZXYII", "ZXYIIIZZZXY",
 "YZXIIIYYYZX"]

"""
StabTest should have Autweak=S3 and Autstrong=S2. Try using ListPerms function.
"""

def ListPerms(Gen):
    """
    Prints out permutations in weak and strong automorphism groups, as well as
    the sizes of the groups.
    Weak automorphism group is calculated first to improve efficiency.
    Input: List of strings with generators for stabilizing subgroup; e.g. Stab513
    """
    genList = GenToArray(Gen)
    C = Stabilizer(Gen)
    short_gen = [w[1:] for w in genList]
    short_code = [w[1:] for w in C]
    size = len(short_code[0])
    weakgroup = []
    stronggroup = []
    for p in Permutations(size):
        check = true
        for w in short_gen:
            if check == true:
                check = [w[p[i]-1] for i in range(size)] in short_code
            else:
                break
        if check == true:
            print(p)
            weakgroup.append(p)
    print("Size of weak automorphism group: " + str(len(weakgroup)))
    for p in weakgroup:
        check = true
        for w in genList:
            if check == true:
                check = ([w[0]]+[w[p[i]] for i in range(size)]) in C
            else:
                break
        if check == true:
            print(p)
            stronggroup.append(p)
    print("Size of strong automorphism group: " + str(len(stronggroup)))
\end{verbatim}

\end{appendices}

\end{document}